\numberwithin{equation}{section}
\newtheorem{thm}{Theorem}[section]
\newtheorem{cor}[thm]{Corollary}
\newtheorem{lem}[thm]{Lemma}
\newcommand{\eqa}{\begin{eqnarray}}
\newcommand{\eeqa}{\end{eqnarray}}
\newcommand{\beq}{\begin{equation}}
\newcommand{\eeq}{\end{equation}}
\newcommand{\nn}{\nonumber}
\newcommand{\p}{\partial}
 \def\res{\mathop{\text{\rm Res}}}
\def \dsum{\displaystyle\sum}
\begin{document}
\title[Stability Limit of Prepotentials]{The Stability Limit of Prepotentials for Hurwitz--Frobenius Manifolds: An Infinite-Dimensional Approach}

\author{Shilin Ma}

\address{Shilin Ma, School of Mathematics and Statistics, Xi'an Jiaotong University, Xi'an 710049, P. R. China}
\email{mashilin@xjtu.edu.cn}

\date{\today}

\begin{abstract}
	The stability of prepotential derivatives for Frobenius manifolds associated with $A_N$ and $D_N$ singularities has been utilized to construct $(2+1)$-dimensional dispersionless integrable hierarchies. Although the generalization of this construction to genus-zero Hurwitz--Frobenius manifolds was shown to yield the genus-zero Whitham hierarchy, a direct geometric explanation of this correspondence has been lacking. In this note, we provide a direct proof of this identification within the framework of infinite-dimensional Frobenius manifolds. We demonstrate that the stability of prepotentials is an intrinsic property of the $\tau$-structure of the Whitham hierarchy. Furthermore, we extend this identification to the hierarchies arising from the stability of solutions to the open WDVV equations with the extensions of the Whitham hierarchy.
\end{abstract}

\maketitle

\section{Introduction}
The notion of Frobenius manifolds, introduced by B. Dubrovin \cite{dub1998}, provides a geometric framework for the Witten--Dijkgraaf--Verlinde--Verlinde (WDVV) equations in two-dimensional topological field theory (2D-TFT). In essence, a Frobenius manifold is characterized by a flat metric and a compatible, smoothly varying Frobenius algebra structure on its tangent bundle, with its quasi-homogeneity governed by a specific Euler vector field. Subsequently, Dubrovin and Zhang \cite{dubrovin2001normal} demonstrated that to any semisimple Frobenius manifold, one can associate a (1+1)-dimensional dispersionless, $\tau$-symmetric bi-Hamiltonian integrable system on its loop space, known as the principal hierarchy. The $\tau$-function of the topological solution to this hierarchy yields the genus-zero partition function of the corresponding TFT. Furthermore, the full-genus partition function is reconstructed via a quasi-Miura deformation of this hierarchy, which linearizes the Virasoro symmetries acting on the $\tau$-function. This theory generalizes the celebrated Witten conjecture \cite{witten1990two, kontsevich1992intersection}, which identifies the partition function of 2D topological quantum gravity with the $\tau$-function of the KdV hierarchy, and serves as a bridge between diverse disciplines, including Gromov-Witten theory \cite{manin1999frobenius, zhang2002cp1, carlet2004extended}, FJRW theory \cite{liu2015bcfg}, and the theory of Hodge integrals \cite{dubrovin2016hodge}. For recent developments and generalizations of the Dubrovin-Zhang theory, we refer the reader to \cite{liu2022variational, liu2023variational, liu2025generalized} and the references therein.

Beyond the Dubrovin-Zhang framework, several alternative methodologies have emerged to associate Frobenius manifolds with integrable hierarchies. For instance, in the context of cohomological field theory (CohFT), the double ramification (DR) hierarchies \cite{buryak2015double} are constructed through the intersection theory of double ramification cycles on the moduli space of curves $\overline{\mathcal{M}}_{g,n}$. For Frobenius manifolds arising from the Gromov–Witten theory of Fano orbifold curves, Hirota-type hierarchies have been established via Givental's quantization method \cite{milanov2007hirota, milanov2016gromov, cheng2021hirota}. This note focuses on the construction initiated by Basalaev, Dunin-Barkowski, and Natanzon \cite{basalaev2021integrable} and further developed in \cite{basalaev2024b, basalaev2025genus}. This framework derives a class of $ (2+1) $-dimensional dispersionless hierarchies by exploiting the stabilization of prepotential derivatives for several families of Frobenius manifolds, including those associated with $A_N$ and $D_N$ singularities \cite{dubrovin1998geometry}, and more generally, the genus-zero Hurwitz-Frobenius manifolds $H_{0;\mathbf{n}}$ \cite{dub1998}. In the latter case, these hierarchies were shown to be equivalent to the genus-zero universal Whitham hierarchy introduced by Krichever \cite{krichever1994tau}.

Basalaev and his co-authors identify these hierarchies by comparing them with the Hirota bilinear forms of known integrable systems. This approach provides limited insight into the geometric origin of the stabilization property and is not straightforwardly extensible to more general classes of Frobenius manifolds. To address these issues, the present note establishes a direct identification between the hierarchies arising from the stabilization of prepotentials and the $\tau$-covers of the principal hierarchies for certain infinite-dimensional Frobenius manifolds. Our approach utilizes the framework of infinite-dimensional Frobenius manifolds introduced by Carlet, Dubrovin, and Mertens \cite{carlet2011infinite}. This framework has since been used to characterize the geometric structures of several hierarchies, including the two-component B-type KP hierarchy \cite{wu2012class}, the Toda lattice hierarchy \cite{wu2014infinite}, and the extended KP hierarchy \cite{ma2021infinite}. Notably, the author, in collaboration with Wu and Zuo \cite{ma2024infinite}, has recently extended this approach to construct the infinite-dimensional Frobenius manifolds underlying the genus-zero universal Whitham hierarchy, thereby providing the necessary foundation for the current study.

More precisely, we demonstrate that the stable limits of prepotential derivatives for the Hurwitz-Frobenius manifolds $H_{0;\mathbf{n}}$ coincide with the $ \tau $-structure of a specific sector of the principal hierarchy for the infinite-dimensional Frobenius manifold associated with the universal Whitham hierarchy (Theorem \ref{preptau}). This result directly establishes the identification of the $ (2+1) $-dimensional hierarchies constructed by Basalaev \cite{basalaev2025genus} with the Whitham hierarchy (Corollary \ref{stabhier}). Furthermore, suitable reductions of this infinite-dimensional framework not only encompass the hierarchies derived from the stabilization of $A_N$ and $D_N$ type prepotentials \cite{basalaev2021integrable,basalaev2024b} as special cases, but also explicitly establish correspondence relations that remained largely implicit in the existing literature.

The framework established in \cite{basalaev2021integrable, basalaev2024b,basalaev2025genus} extends to integrable hierarchies associated with the stability of solutions to the open WDVV equations. These equations, introduced by Horev and Solomon \cite{horev2012open}, originate from the axioms of open Gromov-Witten theory and are applied in the computation of Welschinger invariants. Rossi showed that a solution to the open WDVV equations induces the structure of a flat $ F $-manifold on a one-dimensional extension of the underlying manifold. For solutions associated with $A_N$ and $D_N$ singularities \cite{basalaev2021open}, Basalaev \cite{basalaev2022integrable} constructed a hierarchy by utilizing the stabilization of prepotential derivatives and identified it with the dispersionless $\mathrm{mKP}$ hierarchy.

Recently, the author \cite{ma2025solutions} constructed solutions to the open WDVV equations for the infinite-dimensional Frobenius manifold underlying the genus-zero universal Whitham hierarchy and established the principal hierarchy for the associated flat $ F $-manifold. In the present note, we show that the $ \tau $-structure of the extended sector of this principal hierarchy coincides with the stable limits of the open WDVV derivatives for the genus-zero Hurwitz-Frobenius manifolds $H_{0;\mathbf{n}}$. This result provides an intrinsic geometric characterization of the stabilization property and identifies the hierarchy defined by these stable limits with the extended Whitham hierarchy (Corollary \ref{opencor}). Furthermore, the results of Basalaev \cite{basalaev2022integrable} are recovered as special cases of this identification.

The remainder of this note is organized as follows. In Section 2, we review the definitions of the principal hierarchy and its $\tau$-cover. Section 3 is devoted to a review of the infinite-dimensional Frobenius manifolds underlying the genus-zero universal Whitham hierarchy and the description of their $\tau$-covers. In Section 4, we provide the proof of our main result, which establishes a direct identification between the stable limits of the Hurwitz prepotential derivatives and the $\tau$-structure of a specific sector of the infinite-dimensional principal hierarchy ; we also extend this identification to the open sector. Finally, in Section 5, we investigate the reductions of this infinite-dimensional framework to even cases, which recovers and complements the known results for $D_N$ singularities.

\section{Preliminaries}
In this section, we review the essential definitions of Frobenius manifolds and their open extensions, as well as the associated principal hierarchies and $ \tau $-covers.
\subsection{Frobenius manifolds and their principal hierarchies}
A Frobenius manifold of charge $d$ is an $ n $-dimensional flat manifold $M$ such that each tangent space $T_v M$ is endowed with a Frobenius algebra structure $(T_v M, \circ, e, \langle \cdot, \cdot \rangle)$ that varies smoothly with $v \in M$. This structure is required to satisfy the following axioms:
\begin{enumerate}\item The bilinear form $\langle \cdot, \cdot \rangle$ provides a flat metric on $M$. The unit vector field $e$ is flat, i.e., $\nabla e = 0$, where $\nabla$ denotes the Levi-Civita connection of the flat metric.\item Define a 3-tensor $c$ by $c(X, Y, Z) := \langle X \circ Y, Z \rangle$ for any vector fields $X, Y, Z$ on $M$. Then the 4-tensor $(\nabla_W c)(X, Y, Z)$ is totally symmetric in $X, Y, Z, W$.\item There exists a vector field $E$, called the Euler vector field, which satisfies $\nabla^2 E = 0$ and the following conditions:
	$$[E, X \circ Y] - [E, X] \circ Y - X \circ [E, Y] = X \circ Y,$$
	$$E(\langle X, Y \rangle) - \langle [E, X], Y \rangle - \langle X, [E, Y] \rangle = (2 - d) \langle X, Y \rangle.$$
\end{enumerate}

Let $t = (t^1, \ldots, t^n)$ be a system of flat coordinates such that the unit vector field is given by $e = \frac{\partial}{\partial t^1}$. We denote by $\eta_{\alpha \beta}$ and $c^{\alpha}_{\beta \gamma}$ the components of the metric and the structure constants of the multiplication in these coordinates, respectively. The axioms of a Frobenius manifold ensure the local existence of a prepotential $ F(t) $, which satisfies
$$c_{\alpha \beta \gamma} = \frac{\partial^3 F}{\partial t^\alpha \partial t^\beta \partial t^\gamma},$$
$$\mathcal{L}_E F = (3-d) F + \text{quadratic terms in } t.$$
Consequently, the associativity of the Frobenius algebra is equivalent to the fact that $ F(t) $ satisfies the Witten-Dijkgraaf-Verlinde-Verlinde (WDVV) equations:
$$\frac{\partial^3 F}{\partial t^\alpha \partial t^\beta \partial t^\gamma} \eta^{\gamma \epsilon} \frac{\partial^3 F}{\partial t^\epsilon \partial t^\sigma \partial t^\mu} = \frac{\partial^3 F}{\partial t^\alpha \partial t^\sigma \partial t^\gamma} \eta^{\gamma \epsilon} \frac{\partial^3 F}{\partial t^\epsilon \partial t^\beta \partial t^\mu}.$$

The deformed flat connection on $ M $, originally introduced by Dubrovin in \cite{dub1998}, is defined for vector fields $X, Y \in \mathrm{Vect}(M)$ by:
$$\widetilde{\nabla}_X Y = \nabla_X Y + z X \circ Y, \quad X, Y \in \mathrm{Vect}(M).$$
This connection admits a consistent extension to a flat affine connection on $M \times \mathbf{C}^*$ through the following relations:
$$\begin{aligned}
	& \widetilde{\nabla}_X \frac{d}{dz} = 0, \\
	& \widetilde{\nabla}_{\frac{d}{dz}} \frac{d}{dz} = 0, \\
	& \widetilde{\nabla}_{\frac{d}{dz}} X = \partial_z X + E \circ X - \frac{1}{z} \mathcal{V}(X).
\end{aligned}$$
Here, $X$ is a vector field on $M \times \mathbf{C}^*$ whose $\mathbf{C}^*$ component vanishes, and the linear operator $\mathcal{V}$ is defined by
$$\mathcal{V}(X) := \frac{2-d}{2} X - \nabla_X E.$$

There exists a system of deformed flat coordinates $(\tilde{v}_1(t, z), \dots, \tilde{v}_n(t, z))$ that can be expressed in the following form:
\begin{equation}\label{defmcon}
(\tilde{v}_1(t, z), \dots, \tilde{v}_n(t, z)) = (\theta_1(t, z), \dots, \theta_n(t, z)) z^\mu z^R.
\end{equation}
These coordinates are chosen such that the 1-forms$$\xi_\alpha = \frac{\partial \tilde{v}_\alpha}{\partial t^\beta} dt^\beta, \quad \alpha = 1, \dots, n, \quad \text{and} \quad \xi_{n+1} = dz,$$constitute a basis of solutions to the system $\widetilde{\nabla} \xi = 0$. Here, $\mu = \mathrm{diag}(\mu_1, \dots, \mu_n)$ is a diagonal matrix determined by the spectrum of $M$:$$\mathcal{V} \left( \frac{\partial}{\partial t^\alpha} \right) = \mu_\alpha \frac{\partial}{\partial t^\alpha}, \quad \alpha = 1, \dots, n,$$and $R = R_1 + \dots + R_m$ is a constant nilpotent matrix satisfying the conditions:
$$\begin{aligned}
	& (R_s)_\beta^\alpha = 0 \quad \text{if} \quad \mu_\alpha - \mu_\beta \neq s, \\
	& (R_s)_\alpha^\gamma \eta_{\gamma \beta} = (-1)^{s+1} (R_s)_\beta^\gamma \eta_{\gamma \alpha}.
\end{aligned}$$

The functions $\theta_\alpha(t, z)$ appearing in the expression of the deformed flat coordinates are analytic at $z = 0$ and admit the power series expansion:
$$\theta_\alpha(t, z) = \sum_{p \ge 0} \theta_{\alpha, p}(t) z^p, \quad \alpha = 1, \dots, n.$$
The coefficients of this expansion satisfy the following recursive relations:
\begin{equation}
	\frac{\partial^2 \theta_{\alpha, p+1}(t)}{\partial t^\beta \partial t^\gamma} = c_{\beta \gamma}^\epsilon(t) \frac{\partial \theta_{\alpha, p}(t)}{\partial t^\epsilon}, \label{princon1}
\end{equation}
and
\begin{equation}
	\mathcal{L}_E \left( \frac{\partial \theta_{\alpha, p}(t)}{\partial t^\beta} \right) = (p + \mu_\alpha + \mu_\beta) \frac{\partial \theta_{\alpha, p}(t)}{\partial t^\beta} + \frac{\partial}{\partial t^\beta} \sum_{s=1}^p \theta_{\epsilon, p-s}(t) (R_s)_\alpha^\epsilon. \label{princon2}
\end{equation}
Furthermore, we impose the normalization condition:
\begin{equation}
	\theta_{\alpha, 0}(t) = \eta_{\alpha \beta} t^\beta. \label{princon3}
\end{equation}

The principal hierarchy associated with $M$ is defined as the Hamiltonian system on the loop space $\{ S^1 \to M \}$:
$$\frac{\partial t^\gamma}{\partial T^{\alpha, p}} = \{ t^\gamma(x), \int \theta_{\alpha, p+1}(t) dx \}_1 := \eta^{\gamma \beta} \frac{\partial}{\partial x} \left( \frac{\partial \theta_{\alpha, p+1}(t)}{\partial t^\beta} \right),$$
where $\alpha, \beta = 1, \dots, n$,  $p \ge 0$ , and $x = T^{1, 0}$ is the spatial variable. These flows are $ \tau $-symmetric, meaning that:
$$\frac{\partial \theta_{\alpha, p}(t)}{\partial T^{\beta, q}} = \frac{\partial \theta_{\beta, q}(t)}{\partial T^{\alpha, p}}=\partial_{x} \Omega_{\alpha, p; \beta, q}(t), \quad \forall \alpha, \beta, p, q.$$
The functions $\Omega_{\alpha, p; \beta, q}(t)$ are determined by the generating relation:
\begin{equation}
	\Omega_{\alpha, p-1; \beta, q} + \Omega_{\alpha, p; \beta, q-1} = \sum_{\mu, \nu} \eta^{\mu\nu} \frac{\partial \theta_{\alpha, p}}{\partial t^\mu} \frac{\partial \theta_{\beta, q}}{\partial t^\nu}, \quad p, q \ge 0,\label{taucru}
\end{equation}
subject to the symmetry property and boundary conditions:
 \begin{equation}\label{tausym}
 	\Omega_{\alpha, p; \beta, q} = \Omega_{\beta, q; \alpha, p}, \quad \Omega_{\alpha, p; 1, 0} = \theta_{\alpha, p},\quad \Omega_{\alpha, 0; \beta, q} = \frac{\partial \theta_{\beta, q+1}}{\partial t^{\alpha}}.
 \end{equation}
One can then verify that
$$\frac{\partial \Omega_{\beta, q; \gamma, r}}{\partial T^{\alpha, p}} = \frac{\partial \Omega_{\alpha, p; \gamma, r}}{\partial T^{\beta, q}}$$
and
$$\Omega_{\alpha, 0; \beta, 0}(t) = \frac{\partial^2 F(t)}{\partial t^\alpha \partial t^\beta}.$$
Since $t^{\alpha} = \eta^{\alpha \mu} \theta_{\mu, 0}$, the evolution equations of the principal hierarchy can be rewritten as:
$$\frac{\partial t^\gamma}{\partial T^{\alpha, p}} = \eta^{\gamma \mu} \partial_{x} \Omega_{\alpha, p; \mu, 0}(t).$$
The $ \tau $-cover of the principal hierarchy is given by the following system of second-order logarithmic equations:
$$\frac{\partial^2 \log \tau}{\partial T^{\alpha, p} \partial T^{\beta, q}} = \Omega_{\alpha, p; \beta, q}(t)$$
where the coordinates $t^\alpha$ are expressed in terms of the $ \tau $-function as
$$t^{\alpha} = \eta^{\alpha \mu} \frac{\partial^2 \log \tau}{\partial T^{\mu, 0} \partial x}.$$
Since any solution to the $ \tau $-cover equations yields a solution to the principal hierarchy, this construction effectively lifts the hierarchy from the manifold coordinates to the space of the $ \tau $-function (see \cite{dubrovin2001normal} for further details).

\subsection{Open WDVV equations and flat $F$-manifolds}
A flat $F$-manifold $(M, \nabla, \circ, e)$ consists of an analytic manifold $M$, a flat torsionless connection $\nabla$ on the tangent bundle $TM$, and a commutative, associative algebra structure on each tangent space $T_p M$ with a unit vector field $ e $, satisfying the following conditions:
\begin{enumerate}
	\item[(1)] $\nabla e=0$;
	\item[(2)] There exists a vector field $\Psi$ on $M$, called the vector potential, such that
	$$
	X\circ Y=[X,[Y,\Psi]]
	$$
	for any flat vector field $X$ and $Y$ on $M$.
\end{enumerate}
For further details and properties of flat $ F $-manifolds, we refer the reader to \cite{manin2005f, arsie2018flat, alcolado2017extended}.

As observed by P. Rossi, a solution $F^o(t, s)$ to the open WDVV equations:
$$
c_{\alpha\beta}^{\delta}\frac{\p  F^{o}(t,s)}{\p t^{\delta}\p t^{\gamma}}+\frac{\p  F^{o}(t,s)}{\p t^{\alpha}\p t^{\beta}}\frac{\p  F^{o}(t,s)}{\p t^{\gamma}\p s}=c_{\beta\gamma}^{\delta}\frac{\p  F^{o}(t,s)}{\p t^{\delta}\p t^{\alpha}}+\frac{\p  F^{o}(t,s)}{\p t^{\beta} \p t^{\gamma}}\frac{\p  F^{o}(t,s)}{\p t^{\alpha}\p s}
$$
and
$$
c_{\alpha\beta}^{\delta}\frac{\p  F^{o}(t,s)}{\p  t^{\delta} \p s}+\frac{\p  F^{o}(t,s)}{\p t^{\alpha} \p t^{\beta}}\frac{\p  F^{o}(t,s)}{\p s\p s}=\frac{\p  F^{o}(t,s)}{\p t^{\alpha} \p s}\frac{\p  F^{o}(t,s)}{\p t^{\beta}\p s},
$$
define a flat F-manifold 
$(\tilde{M}, \nabla, *,e)$ where
$\tilde{M} = M \times \mathbf{C}$ with $t^1, \dots, t^n$ be the flat coordinates on $M$ and $s$ be the coordinate on $\mathbf{C}$. 
The flat connection $\nabla$ on $\tilde{M}$ is determined via the flat coordinates $\{t^1, \ldots, t^n, s\}$. For the basis vectors $\partial_A$ and $\partial_B$ with indices $A, B \in \{1, \ldots, n, s\}$, the product is defined as
$$\partial_A * \partial_B = \sum_{C} c_{AB}^C \partial_C$$
where the structure constants $c_{AB}^C$ are specified as follows. The components $c_{\alpha\beta}^\gamma$ for $\alpha, \beta, \gamma \in \{1, \ldots, n\}$ coincide with those of the underlying Frobenius manifold $M$, while the components involving the open sector are given by
$$c_{\alpha \beta}^s = \frac{\partial^2 F^o(t, s)}{\partial t^\alpha \partial t^\beta}, \quad c_{\alpha s}^s=\frac{\partial^2 F^o(t, s)}{\partial t^\alpha \partial s},\quad  c_{1 s}^s=1,\quad c_{\alpha s}^\beta= c_{s s}^\beta= 0.$$

Let $\Theta_{A, p} = \sum_{B} \Theta_{A, p}^B \partial_B$ be a sequence of vector fields on $\tilde{M}$ satisfying 
\begin{equation}\label{openrecu}
	\nabla_X \Theta_{A, p+1} = \Theta_{A, p} * X,\quad \Theta_{A,0}=\p_{A}.
\end{equation}
The principal hierarchy \cite{arsie2018flat} associated with the flat $ F $-manifold $\tilde{M}$ is defined by the system:
$$\frac{\partial \mathbf{w}}{\partial T^{A, p}} = \Theta_{A, p} * \frac{\partial \mathbf{w}}{\partial x}$$
where $\mathbf{w} = (t^1, \ldots, t^n, s)^T$.
We assume that the components of these flows on the Frobenius manifold sector $M$ coincide with the principal hierarchy for $ M $, namely:
$$\Theta_{\alpha, p}^\beta = \sum \eta^{\beta\gamma} \partial_\gamma \theta_{\alpha, p}(t),\quad \Theta_{s, p}^\beta=0,$$
for $\alpha, \beta, \gamma \in \{1, \ldots, n\}$, and denote $\tilde{\theta}_{A, p} = \Theta_{A, p+1}^s.$
The recursion relation then implies that the functions $\tilde{\theta}_{\alpha, p}$ satisfy the following system of equations \cite{basalaev2019open}:
	$$\frac{\partial \tilde{\theta}_{\alpha, p}}{\partial s} = \sum_{\mu, \nu=1}^n \eta^{\mu\nu} \frac{\partial \theta_{\alpha, p}}{\partial t^\nu} \frac{\partial^2 F^o}{\partial t^\mu \partial s} + \tilde{\theta}_{\alpha, p-1} \frac{\partial^2 F^o}{\partial s^2},$$
$$\frac{\partial \tilde{\theta}_{\alpha, p}}{\partial t^\beta} = \sum_{\mu, \nu=1}^n \eta^{\mu\nu} \frac{\partial \theta_{\alpha, p}}{\partial t^\nu} \frac{\partial^2 F^o}{\partial t^\mu \partial t^\beta} + \tilde{\theta}_{\alpha, p-1} \frac{\partial^2 F^o}{\partial s \partial t^\beta},$$
$$\frac{\partial \tilde{\theta}_{s, p}}{\partial s} = \tilde{\theta}_{s, p-1} \frac{\partial^2 F^o}{\partial s^2},\quad \frac{\partial \tilde{\theta}_{s, p}}{\partial t^\beta} = \tilde{\theta}_{s, p-1} \frac{\partial^2 F^o}{\partial s \partial t^\beta}$$
for $p\ge -1$, with the initial condition $\tilde{\theta}_{\alpha, -1} = 0$. In particular, the low-order terms of the recursion can be solved as follows:
$$\tilde{\theta}_{\alpha, 0} = \frac{\partial F^o}{\partial t^\alpha}, \quad \tilde{\theta}_{s, p} = \frac{1}{(p+1)!} \left( \frac{\partial F^o}{\partial s} \right)^{p+1}.$$
Then the principal hierarchy can be expressed as 
\begin{align*}
	&\frac{\partial t^\mu (x)}{\partial T^{\beta, q}} = \frac{\partial}{\partial x} \left( \sum_{\nu=1}^n \eta^{\mu\nu} \frac{\partial \theta_{\beta, q+1}(t)}{\partial t^\nu} \right), \quad \frac{\partial s(x)}{\partial T^{\beta, q}} = \frac{\partial \tilde{\theta}_{\beta, q}(t,s)}{\partial x},\\
	&\frac{\partial t^\mu (x)}{\partial T^{s, q}} = 0, \quad \frac{\partial s(x)}{\partial T^{s, q}} = \frac{\partial \tilde{\theta}_{s, q}(t,s)}{\partial x}.
\end{align*}
By identifying $s = \partial_1 F^o = \tilde{\theta}_{1, 0}$ and utilizing the symmetry relations
$$\frac{\partial \tilde{\theta}_{A, p}}{\partial T^{B, q}} = \frac{\partial \tilde{\theta}_{B, q}}{\partial T^{A, p}},$$
the open $\tau$-cover for the principal hierarchy is defined following \cite{basalaev2019open}.  This construction introduces two $ \tau $-functions, $\tau_1$ and $\tau_2$, which satisfy the following system of logarithmic equations:
$$\frac{\partial^2 \log \tau_{1}}{\partial T^{\alpha, p} \partial T^{\beta, q}} = \Omega_{\alpha, p; \beta, q}(t),\quad \frac{\partial \log \tau_{1}}{\p T^{s,p}}=0,\quad \frac{\p \log \tau_{2}}{\p T^{A, p}}=\tilde{\theta}_{A,p}(t,s)$$
for $p\ge 0$, where
$$t^{\alpha} = \eta^{\alpha \mu} \frac{\partial^2 \log \tau_{1}}{\partial T^{\mu, 0} \partial x},\quad s=\frac{\p \log \tau_{2}}{\p x}$$
with $x=T^{1,0}.$

\section{$\tau$-structure of $\mathcal{M}$}
In this section, we review the infinite-dimensional Frobenius manifold $\mathcal{M}$ associated with the genus-zero universal Whitham hierarchy and its corresponding principal hierarchy. As established in \cite{ma2024infinite}, the genus-zero universal Whitham hierarchy is recovered by restricting the flows of the principal hierarchy to a specific sector. Building upon this framework, we then derive an explicit expression for the $\tau$-structure of the principal hierarchy within this sector.

\subsection{Definitions of $\mathcal{M}$}
We denote by $\mathcal{M}$ the space consisting of pairs of functions $\vec{a} = (a(z), \hat{a}(z))$. These functions are analytic in the domains $\mathbf{D}^{\text{ext}} = \mathbf{P}^1 \setminus \bigcup_{j=1}^m D_j$ and $\mathbf{D}^{\text{int}} = \bigcup_{j=1}^m D_j$ respectively, where $D_1, \ldots, D_m$ denote a set of pairwise disjoint closed disks. Specifically, $a(z)$ is required to have a simple pole at $z = \infty$, while $\hat{a}(z)$ possesses simple poles at the points $\varphi_j \in D_j$ for each $j = 1, \ldots, m$. Their Laurent expansions in the neighborhood of these poles are given by:
\begin{align}
	&a(z)=z+a_{1}z^{-1}+a_{2}z^{-2}+\cdots,\quad z\to \infty;\nn \\	
	&\hat{a}(z)=\hat{a}_{j,-1}(z-\varphi_{j})^{-1}+\hat{a}_{j,0} +\hat{a}_{j,1} (z-\varphi_{j})+\cdots, \quad z\to \varphi_{j}.\label{formal}
\end{align}
Furthermore, we assume that $a(z)$ and $\hat{a}(z)$ have no zeros in their respective domains $\mathbf{D}^{\text{ext}}$ and $\mathbf{D}^{\text{int}}$.

For any $\vec{a} = (a(z), \hat{a}(z)) \in \mathcal{M}$, we identify each vector $X$ in the tangent space $T_{\vec{a}} \mathcal{M}$ with a pair of directional derivatives $(\partial_X a(z), \partial_X \hat{a}(z))$. The metric on $T_{\vec{a}}\mathcal{M}$ is defined by
\begin{equation}\label{whimetric}
	\langle X_1,X_2\rangle_{\eta}=
	-\frac{1}{2\pi\mathrm{i}}\dsum_{j=1}^{m}\oint_{\gamma_{j}}\frac{\p_{1}\zeta(z) \cdot\p_{2}\zeta(z)}{\zeta'(z)}dz
	-\left(\res_{z=\infty}+\dsum_{j=1}^{m}\res_{z=\varphi_{j}}\right)\frac{\p_{1}\ell(z)\cdot \p_{2}\ell(z)}{\ell'(z)}dz,
\end{equation}
 for $X_1, X_2 \in T_{\vec{a}} \mathcal{M}$, where $\partial_{\nu} = \partial_{X_{\nu}}$ for $\nu = 1, 2$. The auxiliary functions $\zeta(z)$ and $\ell(z)$ are defined as
\begin{equation}
	\zeta(z)=a(z)-\hat{a}(z),\quad \ell(z)=a(z)_{+}+\hat{a}(z)_{-}.
\end{equation}
Here, the subscripts $+$ and $-$ denote the components of the unique decomposition $f(z) = f(z)_+ + f(z)_-$. This decomposition is applicable to any function $f(z)$ that is holomorphic in a neighborhood of the boundary $\gamma = \bigcup_{j=1}^m \gamma_j$, with $\gamma_j = \partial D_j$. Specifically, $f(z)_+$ is analytic in $\mathbf{D}^{\mathrm{int}}$, and $f(z)_-$ is analytic in $\mathbf{D}^{\mathrm{ext}}$ and satisfies $f_-(\infty) = 0$.

Let $\mathcal{H}$ denote the space of functions holomorphic in a neighborhood of the domain boundaries $\gamma = \bigcup_{j=1}^m \gamma_j$. The metric can be dually characterized via a linear map $\eta: \mathcal{H} \times \mathcal{H} \to T_{\vec{a}} \mathcal{M}$ defined as
\begin{align}
	\eta(\omega(z), \hat{\omega}(z))=&\bigl(a'(z)(\omega(z)+\hat{\omega}(z))_{-}-(\omega(z) a'(z)+\hat{\omega}(z)\hat{a}'(z))_{-},\nn\\
	&-\hat{a}'(z)(\omega(z)+\hat{\omega}(z))_{+}+(\omega(z) a'(z)+\hat{\omega}(z)\hat{a}'(z))_{+}\bigr). \label{etaom}
\end{align}
This map satisfies the identity
\begin{equation}\label{pairmec}
	\langle\vec{\omega},X\rangle=\langle\eta (\vec{\omega}),X\rangle_{\eta},
\end{equation}
where the pairing between $\vec{\omega} = (\omega(z), \hat{\omega}(z))\in\mathcal{H}\times\mathcal{H}$ and a tangent vector $X \in T_{\vec{a}} \mathcal{M}$ is defined by the following contour integral:
\begin{equation}\label{whipair}
	\langle \vec{\omega}, X \rangle := \frac{1}{2\pi \mathrm{i}} \sum_{j=1}^{m} \oint_{\gamma_j} \left( \omega(z) \partial_X a(z) + \hat{\omega}(z) \partial_X \hat{a}(z) \right) \, dz.
\end{equation}
As established in \cite{ma2024infinite}, the map $\eta$ is surjective. This property allows for a natural identification of the cotangent space $T_{\vec{a}}^\ast \mathcal{M}$ with the following quotient space:$$T_{\vec{a}}^\ast \mathcal{M} = (\mathcal{H} \times \mathcal{H}) / \ker \eta.$$

The flat coordinates of the metric $\eta$ are constructed as follows. 
We consider the restriction of the auxiliary function $\zeta(z)$ to the boundaries $\gamma_j$ and denote
\[
\zeta(z)|_{\gamma_j} = w_j(z)^{d_j}|_{\gamma_j}, \quad w'(z)|_{\gamma_j} \neq 0,
\]
where each $w_j(z)$ is a conformal map that sends $\gamma_j$ to a closed path $\tilde{\gamma}_j$ in the $w$-plane, encircling the origin with winding number $1$. The inverse function $z(w_j)$ admits a Laurent expansion of the form
\begin{equation}\label{texp}
	z(w_{j})=\sum_{l \in \mathbb{Z}} t_{j,l} w_{j}^l, \quad w_{j} \in \tilde{\gamma}_j,
\end{equation}
where the coefficients $t_{j, l}$ are determined by the following contour integrals:
\[
t_{j,l}=\frac{1}{2 \pi \mathrm{i}} \oint_{\tilde{\gamma}_j} z(w_{j}) w_{j}^{-l-1} d w_{j}, \quad  j\in\{1,2,\cdots,m\},\ l \in \mathbb{Z}.
\]
Near the poles $\varphi_j$, the auxiliary function $\ell(z)$ exhibits the following behavior:
$$
\begin{aligned}
	& \ell(z)=b_{j,-1}(z-\varphi_{j})^{-1}+b_{j,0}+\mathcal{O}(z-\varphi_{j}), \quad z \rightarrow \varphi_{j} .
\end{aligned}
$$
The inverse function $z(\ell)$ can then be expanded into a Laurent series as follows:
\begin{align}
	& z(\ell)=h_{j,0}+h_{j,1} \ell^{-1}+\mathcal{O}(\ell^{-2}), \quad z \rightarrow \varphi_{j} .\label{hhexp}
\end{align}
We denote the set of all such coordinates by $ \mathbf{t} \cup \mathbf{h}$, where
\begin{align}
	\mathbf{t}&=\{t_{i,l}|1\le i\le m,\ l\in \mathbb{Z}\}, \label{flatt}\\
	\mathbf{h}&=\{h_{k,r}|1\le k\le m,\ r\in\{0,1\}\}. \label{flath}
\end{align}
Under this coordinate system, the metric $\eta$ takes the following constant forms:
	\begin{align}
	&\left\langle\frac{\partial}{\partial t_{i,l}}, \frac{\partial}{\partial t_{i,l'}}\right\rangle_\eta=-d_{i} \delta_{-d_{i}, l+l'}, \quad i\in\left\{1,\cdots,m\right\},\ l,l'\in\mathbb{Z}; \label{etaflat1}\\
	&\left\langle\frac{\partial}{\partial h_{k,r}}, \frac{\partial}{\partial h_{k,r'}}\right\rangle_\eta=\delta_{1, r+r'},\quad k\in\left\{1,\cdots,m\right\},\ r,r'\in\left\{0,1\right\}, \label{etaflat3}
\end{align}
and any other pairing between these basis vectors vanishes, confirming that $\mathbf{t}\cup\mathbf{h}$ constitute a complete system of flat coordinates for the metric $\eta$.

The flat vector fields corresponding to the coordinates $\mathbf{t}$ and $\mathbf{h}$ are given by the following expressions:
		\begin{align}
		&	\frac{\p \vec{a}}{\p t_{i,l}}=\left(-(\zeta(z)^{\frac{l}{d_{i}}}\zeta'(z)\mathbf{1}_{\gamma_{i}})_{-}, (\zeta(z)^{\frac{l}{d_{i}}}\zeta'(z)\mathbf{1}_{\gamma_{i}})_{+}\right),\quad 1\le i\le m, ~ l\in\mathbb{Z}; \label{flatcom1} \\
		&	\frac{\p \vec{a}}{\p h_{k,r}}=\left(-(\ell'(z)\ell(z)^{-r}\mathbf{1}_{\gamma_{k}})_{-},-(\ell'(z)\ell(z)^{-r}\mathbf{1}_{\gamma_{k}})_{-}\right),\quad 1\le k\le m, ~ 0\le r\le 1,\label{flatcom3}
	\end{align}
	where $\mathbf{1}_{\gamma_k}\in\mathcal{H}$ denotes the characteristic function of the boundary component $\gamma_k$. 
	Specifically, we have
	\begin{align*}
			&\frac{\p a(z)}{\p h_{k,0}}=\frac{\p \hat{a}(z)}{\p h_{k,0}}=-(\hat{a}'(z))_{\varphi_{k},\le -1},\quad 	\frac{\p a(z)}{\p h_{k,1}}=\frac{\p \hat{a}(z)}{\p h_{k,1}}=\frac{1}{z-\varphi_{k}},
	\end{align*}
	where $(\cdot)_{\varphi_k, \le -1}$ denotes the principal part of the Laurent expansion at the pole $\varphi_k$. 
	
	The unit vector field $e$ is given by
	$$e=\sum_{j=1}^{m}\left(\frac{\partial}{\partial t_{j,0}} + \frac{\partial}{\partial h_{j,0}}\right)$$ and satisfies $$\partial_e a(z) = 1 - a'(z), \quad \partial_e \hat{a}(z) = 1 - \hat{a}'(z).$$
	\subsection{Principal hierarchy and $\tau$-structure}
The Hamiltonian densities $\{ \theta_{\alpha, p} \}$ for the principal hierarchy associated with the flat coordinates $\mathbf{t}$ and $\mathbf{h}$ are defined by the following residue and contour integral expressions:
\begin{align*}
	\theta_{h_{k,0},p}=&\frac{1}{(1+p)!} \res_{z=\varphi_{k}} \hat{a}(z)^{1+p} d z,\\
	\theta_{h_{k,1},p} =& \frac{1}{2\pi\mathbf{i}}\oint_{\gamma_{k}} \frac{\hat{a}(z)^{p} - a(z)^p}{p!} \log \frac{\zeta(z)^{\frac{1}{d_{k}}}}{z-\varphi_{k}} dz + \res_{z=\varphi_{k}} \frac{\hat{a}(z)^{p}}{p!} \left( \log \left( \hat{a}(z)(z-\varphi_{k}) \right) - c_{p} \right) dz\\
	&- \res_{z=\infty} \frac{a(z)^{p}}{p!} \left( \log \frac{a(z)}{z-\varphi_{k}} - c_{p} \right) dz + \frac{1}{2 \pi \mathbf{i}} \sum_{i' \neq k} \oint_{\gamma_{i'}} \frac{a(z)^p}{p!} \log(z-\varphi_{k}) dz,\\
	\theta_{t_{i,l},p} =& \frac{1}{2 \pi \mathbf{i}} \frac{1}{(p+1) !} \frac{d_{i}}{l+d_{i}} \oint_{\gamma_{i}} \zeta(z)^{\frac{l}{d_{i}}} \left( a(z)^{p+1} - \hat{a}(z)^{p+1} \right) d z,\\
	\theta_{t_{i,-d_{i}},p} =& \frac{d_{i}}{2 \pi \mathbf{i}} \oint_{\gamma_{i}} \frac{a(z)^p}{p !} \log \frac{\zeta(z)^{\frac{1}{d_{i}}}}{z-\varphi_{i}} d z + d_{i} \res_{z=\infty} \frac{a(z)^{p}}{p!} \left( \log \frac{a(z)}{z-\varphi_{i}} - c_{p} \right) d z \\
	&- \frac{d_{i}}{2 \pi \mathrm{i}} \sum_{i' \neq i} \oint_{\gamma_{i'}} \frac{a(z)^p}{p!} \log(z-\varphi_{i}) d z
\end{align*}
where $c_p = \sum_{j=1}^p \frac{1}{j}$ and $c_{0}=0$. Accordingly, the Hamiltonian density associated with the unit vector field $e$ is given by:
$$
\theta_{e,p} = \sum_{k=1}^{m} \left( \theta_{t_{k,0},p} + \theta_{h_{k,0},p} \right)=-\frac{1}{(1+p)!} \res_{z=\infty} a(z)^{1+p} d z.
$$

Suppose that the Laurent series $\lambda_i(z)$ are defined by the expansions of $a(z)$ and $\hat{a}(z)$ in the vicinity of their respective poles:
\begin{equation}\label{assum}
	a(z) = \lambda_0(z) \hbox{ as } z\to\infty; \qquad \hat{a}(z)= \lambda_j(z) \hbox{ as } z\to \varphi_{j},\quad j = 1, \ldots, m.
\end{equation}
Then the genus-zero universal Whitham hierarchy is characterized by the following Lax equations:
\begin{align}
	& 
	\frac{\partial \lambda_{i}(z)}{\partial \sigma^{0,k}}=\{\left(\lambda_{0}(z)^k\right)_{\infty,\,\ge 0}, \lambda_{i}(z)\}, \label{whi1}\\
	& 
	\frac{\partial \lambda_{i}(z)}{\partial \sigma^{j,k}}=\{-(\lambda_{j}(z)^k)_{\varphi_{j},\,< 0}, \lambda_{i}(z)\},\label{whi2}
	\\
	& 
	\frac{\partial \lambda_{i}(z)}{\partial \sigma^{j,0}}=\{\log (z-\varphi_{j}), \lambda_{i}(z)\}, \label{whi3}
\end{align}
where $\{ \cdot, \cdot \}$ denotes the Poisson bracket $\{f,g\}=\p_{z}f\p_{x}g-\p_{x}f\p_{z}g$
with $x=\sigma^{0,1}$.
The flows of the principal hierarchy on $\mathcal{M}$ are identified with the Whitham hierarchy flows via
\begin{align*}
	&\frac{1}{(p+1)!}\frac{\p}{\p \sigma^{i,p+1}}=\frac{\p}{\p T^{h_{i,0},p}}, \quad \frac{\p}{\p \sigma^{i,0}}=\frac{\p}{\p T^{h_{i,1},0}},\quad i=1,\dots,m; \\
	&\frac{1}{(p+1)!}\frac{\p}{\p \sigma^{0,p+1}}=\frac{\p}{\p T^{e,p}}:=\dsum_{k=1}^{m}\left(\frac{\p}{\p T^{t_{k,0},p}}+\frac{\p}{\p T^{h_{k,0},p}}\right).
\end{align*}

The following theorem provides the explicit expression of the $\tau$-structure for the principal hierarchy when restricted to the sector corresponding to the universal Whitham hierarchy.
\begin{thm}\label{taufor}
The functions $\Omega_{\alpha, p; \beta, q}$ defined by the following contour integrals constitute the $\tau$-structure for the principal hierarchy over the sector $\{h_{k, 0}\}_{k=1}^m \cup \{e\}:$
	\begin{align*}
		&\Omega_{\alpha,p;\beta,q}= \frac{1}{(1+p)! (1+q)!}\frac{1}{2\pi\mathrm{i}}\oint_{\gamma}(Q_{\alpha,p})_{-}d Q_{\beta,q},\quad \alpha,\beta\in \{h_{k,0}\}_{k=1}^{m},\\
		&\Omega_{e,p;\beta,q}= -\frac{1}{(1+p)! (1+q)!}\frac{1}{2\pi\mathrm{i}}\oint_{\gamma}(Q_{e,p})_{+}d Q_{\beta,q},\quad \beta\in \{h_{k,0}\}_{k=1}^{m} \cup\{e\},
	\end{align*}
	where
	$$Q_{e, p} = a(z)^{p + 1} \sum_{k=1}^m \mathbf{1}_{\gamma_k}, \quad Q_{h_{k, 0}, p} = \hat{a}(z)^{p + 1} \mathbf{1}_{\gamma_k}$$and $\gamma = \bigcup_{k=1}^m \gamma_k$ denotes the union of the boundaries of the disks $D_k$.
\end{thm}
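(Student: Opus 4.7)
The plan is to prove Theorem \ref{taufor} by verifying the three defining properties of a $\tau$-structure: the symmetry $\Omega_{\alpha,p;\beta,q} = \Omega_{\beta,q;\alpha,p}$, the boundary conditions $\Omega_{\alpha,p;e,0} = \theta_{\alpha,p}$ and $\Omega_{\alpha,0;\beta,q} = \partial_\alpha \theta_{\beta,q+1}$ (with the unit field $e$ playing the role of $t^1$ in this infinite-dimensional sector), and the generating relation \eqref{taucru}. The key structural identity I would use throughout is $dQ_{\alpha,p}/(p+1)! = Q_{\alpha,p-1}\, dA_\alpha/p!$, where $A_\alpha = \hat{a}(z)$ for $\alpha \in \{h_{k,0}\}_{k=1}^m$ and $A_\alpha = a(z)$ for $\alpha = e$. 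This identity encodes the recursion between consecutive levels of the generating functions $Q$.

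For the symmetry, integration by parts on the closed contour $\gamma$ is the main tool: since products $(Q_{\alpha,p})_\pm (Q_{\beta,q})_\pm$ are single-valued on $\gamma$, their total differentials integrate to zero, and the Plemelj-Sokhotski decomposition $Q = Q_+ + Q_-$ then permits swapping the two indices in the contour integrals. The mixed $(e, h_{k,0})$ case works analogously, with the minus sign appearing in the definition of $\Omega_{e,p;\beta,q}$ precisely offsetting the swap between $(\cdot)_+$ and $(\cdot)_-$ projections. For the boundary conditions, deforming $\gamma$ and collecting residues suffices: for instance, $\Omega_{h_{k,0},p;e,0}$ reduces to $\frac{1}{(p+1)!}\res_{z=\varphi_k}\hat{a}(z)^{p+1}\, dz = \theta_{h_{k,0},p}$ after pulling the contour into $\mathbf{D}^{\mathrm{int}}$, while the analogous $(e,e)$-case yields a residue at $z=\infty$ that matches $\theta_{e,p}$.

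The principal technical step is the generating relation. Its right-hand side $\sum_{\mu,\nu}\eta^{\mu\nu}\partial_\mu \theta_{\alpha,p}\, \partial_\nu \theta_{\beta,q}$ must be interpreted, in the infinite-dimensional setting, as the inverse-metric pairing $\langle d\theta_{\alpha,p}, d\theta_{\beta,q}\rangle$ on the cotangent space. I would identify cotangent representatives $\vec{\omega}_{\alpha,p} \in \mathcal{H}\times\mathcal{H}$ satisfying $\partial_X\theta_{\alpha,p} = \langle\vec{\omega}_{\alpha,p}, X\rangle$ via \eqref{whipair}, and then apply the map $\eta$ from \eqref{etaom} together with the duality \eqref{pairmec} to recast the pairing as a double contour integral over $\gamma$. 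The goal is to show that modulo $\ker\eta$, the cotangent representative $\vec{\omega}_{\alpha,p}$ is carried by $Q_{\alpha,p-1}/p!$ restricted to the appropriate boundary component, so that after applying $\eta$ the pairing acquires precisely the contour integral structure of $\Omega_{\alpha,p-1;\beta,q} + \Omega_{\alpha,p;\beta,q-1}$ produced by the key recursion identity above. The main obstacle is expected to be the bookkeeping of Plemelj-Sokhotski projections: the proposed formula uses $(\cdot)_-$ in the $h_{k,0}$-case and $(\cdot)_+$ in the $e$-case, and matching these to the projections naturally produced by $\eta$ requires careful cancellation of boundary contributions at $z = \infty$ and $z = \varphi_k$ across all four sector combinations.
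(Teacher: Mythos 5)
Your proposal is correct and follows essentially the same route as the paper: the boundary conditions and symmetry are checked by direct residue/contour manipulations, and the main step is the generating relation \eqref{taucru}, verified by representing $d\theta_{\alpha,p}$ in $\mathcal{H}\times\mathcal{H}$ through $Q_{\alpha,p-1}/p!$ (equivalently $\tfrac{1}{(p+1)!}(\partial_a Q_{\alpha,p},\partial_{\hat a}Q_{\alpha,p})$), applying the map $\eta$ of \eqref{etaom} with the pairing \eqref{whipair}, and using the identity $\oint_\gamma f_+\,dz=-\oint_\gamma f_-\,dz$ for total derivatives to sort the $(\cdot)_\pm$ projections into $\Omega_{\alpha,p-1;\beta,q}+\Omega_{\alpha,p;\beta,q-1}$. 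The bookkeeping you flag as the main obstacle is exactly the computation carried out in the paper.
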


\begin{proof}
	The initial conditions \eqref{tausym} are satisfied by direct verification of the explicit formula for $\Omega_{\alpha, p; \beta, q}$  and $\theta_{\alpha, p}$. To verify the recursion relation \eqref{taucru}:
	$$\Omega_{\alpha, p-1; \beta, q} + \Omega_{\alpha, p; \beta, q-1} = \langle d\theta_{\alpha, p}, \eta(d\theta_{\beta, q}) \rangle,$$
	we consider the differentials $d\theta_{\alpha, p} \in \mathcal{H} \times \mathcal{H}$ defined as
	$$d\theta_{\alpha, p} = \frac{1}{(p + 1)!} \left( \frac{\partial Q_{\alpha, p}}{\partial a}, \frac{\partial Q_{\alpha, p}}{\partial \hat{a}} \right)$$
	for  $\alpha,\beta\in \{h_{k,0}\}_{k=1}^{m}$.
	
	 Utilizing the derivative relations:
	   	$$
	\frac{\p Q_{\alpha,p}}{\p a}+\frac{\p Q_{\alpha,p}}{\p \hat{a}}=(p+1)Q_{\alpha,p-1},\quad Q_{\alpha,p}'=\frac{\p Q_{\alpha,p}}{\p a} a'(z) + \frac{\p Q_{\alpha,p}}{\p \hat{a}} \hat{a}'(z),
	$$
	the right-hand side (RHS) is expanded using the explicit form of the map $\eta$ defined in \eqref{etaom} and the pairing $\langle \cdot, \cdot \rangle$ defined in \eqref{pairmec}:
		$$\begin{aligned}
		\langle d\theta_{\alpha, p}, \eta(d\theta_{\beta, q}) &= C_{p,q} \oint_{\gamma} \Bigg\{ \frac{\p Q_{\alpha,p}}{\p a} \left[ a'(z) \left(\frac{\p Q_{\beta,q}}{\p a} + \frac{\p Q_{\beta,q}}{\p \hat{a}}\right)_{-} - \left(\frac{\p Q_{\beta,q}}{\p a} a'(z) + \frac{\p Q_{\beta,q}}{\p \hat{a}} \hat{a}'(z)\right)_{-} \right] \\
		&\quad + \frac{\p Q_{\alpha,p}}{\p \hat{a}} \left[ -\hat{a}'(z) \left(\frac{\p Q_{\beta,q}}{\p a} + \frac{\p Q_{\beta,q}}{\p \hat{a}}\right)_{+} + \left(\frac{\p Q_{\beta,q}}{\p a} a'(z) + \frac{\p Q_{\beta,q}}{\p \hat{a}} \hat{a}'(z)\right)_{+} \right] \Bigg\} dz\\
		&= C_{p,q-1} \oint_{\gamma} \left[ \frac{\partial Q_{\alpha,p}}{\partial a} a' (Q_{\beta,q-1})_{-} - \frac{\partial Q_{\alpha,p}}{\partial \hat{a}} \hat{a}' (Q_{\beta,q-1})_{+} \right] dz\\
		&\quad +C_{p,q} \oint_{\gamma} \left[ -\frac{\partial Q_{\alpha,p}}{\partial a} (Q_{\beta,q}')_{-} + \frac{\partial Q_{\alpha,p}}{\partial \hat{a}} (Q_{\beta,q}')_{+} \right] dz\\
		&= C_{p,q-1} \oint_{\gamma} \left[ \frac{\partial Q_{\alpha,p}}{\partial a} a' (Q_{\beta,q-1})_{-} + \frac{\partial Q_{\alpha,p}}{\partial \hat{a}} \hat{a}' (Q_{\beta,q-1})_{-} \right] dz\\
		&\quad +C_{p,q} \oint_{\gamma} \left[ -\frac{\partial Q_{\alpha,p}}{\partial a} (Q_{\beta,q}')_{-} - \frac{\partial Q_{\alpha,p}}{\partial \hat{a}} (Q_{\beta,q}')_{-} \right] dz\\
		&=C_{p,q-1} \oint_{\gamma}Q_{\alpha,p}'(Q_{\beta,q-1})_{-}dz+C_{p-1,q}\oint_{\gamma}Q_{\alpha,p-1}'(Q_{\beta,q})_{-}dz,
	\end{aligned}$$
	where we denote
	$$C_{p, q} = \frac{1}{(1 + p)! (1 + q)!}.$$
	By applying the contour identity $\oint_{\gamma} f_+ dz = - \oint_{\gamma} f_- dz$ which holds when $f$ is a total derivative, we can rearrange the terms to obtain:
	$$\langle d\theta_{\alpha, p}, \eta(d\theta_{\beta, q}) = C_{p, q-1} \oint_{\gamma} Q_{\alpha, p}' (Q_{\beta, q-1})_- dz + C_{p-1, q} \oint_{\gamma} Q_{\alpha, p-1}' (Q_{\beta, q})_- dz.$$
	This expression coincides with the definition of $\Omega_{\alpha, p; \beta, q-1} + \Omega_{\alpha, p-1; \beta, q}$ in the explicit formula, thereby completing the proof for this sector. 
	
	The remaining cases can be established through a similar procedure.
\end{proof}
The components of the $\tau$-structure involving the logarithmic sector $h_{k, 1}$ are determined by the derivatives of the Hamiltonian densities:
$$\Omega_{\alpha, p; h_{k, 1}, 0} = \frac{\partial \theta_{\alpha, p + 1}}{\partial h_{k, 1}}.$$
Evaluating this derivative for the sector associated with $e$ yields:
$$\Omega_{e, p; h_{k, 1}, 0} = - \frac{1}{(p + 1)!} \mathrm{Res}_{z = \infty} \frac{a(z)^{p + 1}}{z - \varphi_k} dz.$$
Similarly, for the sectors associated with $h_{j, 0}$, we obtain: $$\Omega_{h_{j, 0}, p; h_{k, 1}, 0} = \frac{1}{(p + 1)!} \mathrm{Res}_{z = \varphi_j} \frac{\hat{a}(z)^{p + 1}}{z - \varphi_k} dz.$$
In particular,
$$\Omega_{h_{j, 1}, 0; h_{k, 1}, 0} = \log \left( \varphi_j - \varphi_k \right), \quad j \neq k,$$$$\Omega_{h_{k, 1}, 0; h_{k, 1}, 0} = \log \hat{a}_{k, -1},$$
where $\hat{a}_{k, -1}$ denotes the residue coefficient defined in the Laurent expansion of $\hat{a}(z)$ near $\varphi_k$.

\subsection{Properties of the $\tau$-structure}\label{whmtau}
Consider the expansions of $z$ in terms of the functions $a$ and $\hat{a}$ near their respective poles:
$$z = a - u_{0,1} a^{-1} - u_{0,2} a^{-2} - \cdots, \quad z \to \infty,$$
$$z = u_{j,0} + u_{j,1} \hat{a}^{-1} + u_{j,2} \hat{a}^{-2} - \cdots, \quad z \to \varphi_{j}.$$
The expansion coefficients $\{u_{k, n}\}$ form a coordinate system on $\mathcal{M}$. They are directly related to the Hamiltonian densities via the following identities:
$$\theta_{e, p} = \frac{1}{p!} u_{0, p+1}, \quad \theta_{h_{k, 0}, p} = \frac{1}{p!} u_{k, p+1}, \quad \theta_{h_{k, 1}, 0} = u_{k, 0},\quad k = 1, \ldots, m.$$
The partial derivatives of $\vec{a}$ with respect to these coordinates are expressed as:
$$\frac{\partial \vec{a}}{\partial u_{0, j}} = ( a(z)^{-j} a'(z), 0 ),\quad \frac{\partial \vec{a}}{\partial u_{k, j}} = ( 0, - ( \hat{a}(z)^{-j} \hat{a}'(z) ) \mathbf{1}_{\gamma_{k}} ).$$
In terms of the coordinates $\{u_{k, n}\}$, the components of the $\tau$-structure in the logarithmic sector take the following explicit form:
$$\Omega_{h_{k,1}, 0; h_{k',1}, 0} = \log(u_{k,0} - u_{k',0}),\quad k\ne k',\quad \Omega_{h_{k,1}, 0; h_{k,1}, 0} = \log u_{k,1}.$$

The following property of the $\tau$-structure is essential for deriving the stability conditions for the derivatives of the prepotential associated with $H_{0; \mathbf{n}}$ in the subsequent subsection.
\begin{lem}\label{taufini}
The components of the $\tau$-structure are determined by a finite number of expansion coefficients $\{ u_{k, j} \}$. Specifically, the dependence relations are as follows:

	\begin{tabular}{l l}
	$ \Omega_{h_{k,0}, p; h_{k,0}, q} $ & depends on $ \{ u_{k, j} \mid 0 \le j \le p+q+3 \}; $ \\
	$ \Omega_{e, p; e, q} $             & depends on $ \{ u_{0, j} \mid 1 \le j \le p+q+1 \}; $ \\
	$ \Omega_{e, p; h_{k,0}, q} $       & depends on $ \{ u_{0, j} \mid 1 \le j \le p \} \cup \{ u_{k, j} \mid 0 \le j \le q+1 \}; $ \\
	$ \Omega_{h_{k_{1}, 0}, p; h_{k_{2}, 0}, q} $ & depends on $ \{ u_{k_{1}, j} \mid 0 \le j \le p+1 \} \cup \{ u_{k_{2}, j} \mid 0 \le j \le q+1 \} $, for $ k_{1} \neq k_{2}; $ \\
	$ \Omega_{h_{k_{1}, 1}, 0; h_{k_{2}, 0}, q} $ & depends on $ \{ u_{k_{1}, 0} \} \cup \{ u_{k_{2}, j} \mid 0 \le j \le q+1 \} $, for $ k_{1} \neq k_{2}; $ \\
	$ \Omega_{h_{k, 1}, 0; h_{k, 0}, q} $ & depends on $ \{ u_{k, j} \mid 0 \le j \le q+2 \}; $\\
		$ \Omega_{h_{k, 1}, 0; e, q} $ & depends on $ \{ u_{0, j} \mid 1 \le j \le q \}. $
\end{tabular}
\end{lem}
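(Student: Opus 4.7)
The strategy is to reduce every component $\Omega_{\alpha,p;\beta,q}$ to a residue computation at one of the poles $z=\infty$ or $z=\varphi_k$, from which the dependence on finitely many coordinates $u_{\cdot,j}$ is then read off by a direct degree count. For the components lying in the sector $\{h_{k,0}\}_{k=1}^m\cup\{e\}$, I would start from the contour integral formulas of Theorem~\ref{taufor} and deform the contour $\gamma$ either outward toward a small neighborhood of $\infty$ (for factors built from $a(z)$) or inward toward a small loop around $\varphi_k$ (for factors built from $\hat{a}(z)\mathbf{1}_{\gamma_k}$). For the mixed entries involving the logarithmic index $h_{k,1}$, I would start instead from the explicit residue formulas $\Omega_{e,p;h_{k,1},0}=-\frac{1}{(p+1)!}\res_{z=\infty}\frac{a(z)^{p+1}}{z-\varphi_k}\,dz$ and the analogous expression for $\Omega_{h_{j,0},p;h_{k,1},0}$, both written immediately after the theorem.

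Once the integral is localized at a pole, I change to the natural local coordinate on $\mathcal{M}$: $a$ itself near $\infty$, in which $z = a - \sum_{j\ge 1} u_{0,j}a^{-j}$, and $\hat{a}$ itself near $\varphi_k$, in which $z = u_{k,0} + \sum_{j\ge 1}u_{k,j}\hat{a}^{-j}$. After the substitution, the integrand becomes a Laurent series in the uniformizer whose coefficients are polynomials in the $u_{\cdot,j}$, and the residue extracts the coefficient of $a^{-1}\,da$ or its $\hat{a}$-analogue. The stated index bounds follow by pairing the pole order of each factor with the order of the highest $u_{\cdot,j}$ that can contribute: for $\Omega_{e,p;e,q}$ one tracks powers through $a^{p+q+1}$ so that $u_{0,j}$ with $1\le j\le p+q+1$ is the sharp range; the two extra units in $\Omega_{h_{k,0},p;h_{k,0},q}$ reflect that the expansion of $z-\varphi_k$ in $\hat{a}^{-1}$ has no constant term and that the coordinate $u_{k,0}$ itself is a free parameter entering through the location of the pole. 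Off-diagonal entries such as $\Omega_{e,p;h_{k,0},q}$ and $\Omega_{h_{k_1,0},p;h_{k_2,0},q}$ with $k_1\ne k_2$ decouple because their two factors are localized at distinct poles, so the corresponding dependencies appear as a disjoint union; the logarithmic entries are handled similarly, with the extra factor $(z-\varphi_k)^{-1}$ lowering the relevant pole order by one.

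The main obstacle I anticipate is the bookkeeping required to verify that each bound is sharp, not merely to show finiteness of dependence, since this demands tracking the $(\cdot)_\pm$ projections through the Laurent filtration induced by the $u_{\cdot,j}$ indices. My plan is to verify one representative case in full detail, with $\Omega_{e,p;e,q}$ being the natural choice, by expanding $(a^{p+1})_+$ explicitly as a polynomial in $z$ with coefficients polynomial in $u_{0,1},\dots,u_{0,p}$ and pairing it with $a^q\,da$ to extract the residue at $a=\infty$. The remaining cases then follow from the symmetric calculation at $\varphi_k$ (with $a\leftrightarrow\hat{a}$) together with the decoupling observation that integrands localized at distinct poles contribute additively.
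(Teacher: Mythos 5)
Your overall strategy --- localize each component at the relevant pole ($z=\infty$ or $z=\varphi_k$), pass to the uniformizing coordinate, and count pole orders --- is the same residue argument the paper uses, and it is correct. The difference is in how the upper bound on the index $j$ is extracted. You propose to expand the full integrand as a Laurent series whose coefficients are polynomials in the $u_{\cdot,j}$ and to read off which variables survive; you correctly identify the bookkeeping of the $(\cdot)_\pm$ projections as the main obstacle. The paper avoids that bookkeeping entirely by differentiating first: using the explicit flat vector fields $\partial \vec a/\partial u_{0,j}=(a^{-j}a',0)$ and $\partial \vec a/\partial u_{k,j}=(0,-\hat a^{-j}\hat a'\,\mathbf{1}_{\gamma_k})$, one computes $\partial\Omega_{\alpha,p;\beta,q}/\partial u_{k,j}$ as a sum of two contour integrals of the same shape as in Theorem \ref{taufor} but with one factor replaced by $\hat a^{p-j}\hat a'$ (resp.\ $\hat a^{q-j}\hat a'$); for $j$ beyond the stated threshold this factor is holomorphic at $\varphi_k$ and vanishes to high enough order that the residue against the other (finite pole order) factor is zero. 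This turns ``$u_{k,j}$ does not appear'' into a one-line vanishing statement rather than a cancellation to be tracked through the projections, and I would recommend it as the cleaner route. Two smaller remarks: the lemma only asserts containment of the dependency set, so your concern about sharpness of the bounds is unnecessary work; and for the off-diagonal case $k_1\neq k_2$ the paper also uses the symmetry $\Omega_{\alpha,p;\beta,q}=\Omega_{\beta,q;\alpha,p}$ to get the two one-sided bounds from a single residue formula, which you may find convenient.
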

\begin{proof}
	Using the explicit formulas in Theorem \ref{taufor}, we have:
	\begin{align*}
		\Omega_{h_{k,0}, p; h_{k,0}, q} = &C_{p,q}\oint_{\gamma} \left( \hat{a}(z)^{1+p} \textbf{1}_{\gamma_{k}}\right)_{-} d \left( \hat{a}(z)^{1+q} \textbf{1}_{\gamma_{k}}\right)\\
		=&C_{p,q-1}\res_{\varphi_{k}} \left( \hat{a}(z)^{1+p} \right)_{\varphi_{k},\le -1} \hat{a}(z)^{q}\hat{a}'(z)d z.
	\end{align*}
This expression is determined solely by the coefficients $\{ u_{k, j} \mid j \ge 0 \}$. To establish the upper bound on this dependence, we consider the derivative with respect to $u_{k, j}$:
\begin{align*}
	\frac{\p 	\Omega_{h_{k,0}, p; h_{k,0}, q}}{\p u_{k,j}}=&C_{p-1,q}\oint_{\gamma} \left( \hat{a}(z)^{p-j}\hat{a}'(z) \textbf{1}_{\gamma_{k}}\right)_{-} d \left( \hat{a}(z)^{1+q} \textbf{1}_{\gamma_{k}}\right)\\
	&+C_{p,q-1}\oint_{\gamma} \left( \hat{a}(z)^{1+p} \textbf{1}_{\gamma_{k}}\right)_{-} d \left( \hat{a}(z)^{q-j}\hat{a}'(z) \textbf{1}_{\gamma_{k}}\right).
\end{align*}
For $j > p + q + 3$, these integrals vanish as the order of the pole at $\varphi_k$ becomes non-positive.

Similarly, for the cross-sector terms where $k_1 \neq k_2$, the integral reduces to:
	\begin{align*}
		\Omega_{h_{k_{1},0}, p; h_{k_{2},0}, q} = &C_{p,q}\oint_{\gamma} \left( \hat{a}(z)^{1+p} \textbf{1}_{\gamma_{k_{1}}}\right)_{-} d \left( \hat{a}(z)^{1+q} \textbf{1}_{\gamma_{k_{2}}}\right)\\
		=&C_{p,q-1}\res_{\varphi_{k_{2}}} \left( \hat{a}(z)^{1+p} \right)_{\varphi_{k_{1}},\le -1} \hat{a}(z)^{q}  \hat{a}'(z)dz \\
		=&C_{q,p-1}\res_{\varphi_{k_{1}}} \left( \hat{a}(z)^{1+q} \right)_{\varphi_{k_{2}},\le -1} \hat{a}(z)^{p}  \hat{a}'(z)dz.
	\end{align*}
The evaluation of this residue depends solely on the principal part of the expansion at $\varphi_{k_1}$ and expansion at $\varphi_{k_2}$. By symmetry, this constrains the dependency to the indices $\{ u_{k_1, j} \mid 0 \le j \le p + 1 \}$ and $\{ u_{k_2, j} \mid 0 \le j \le q + 1 \}$, respectively.

The remaining cases follow from analogous residue calculations.
\end{proof}
In view of the finite dependency properties established in Lemma \ref{taufini}, we define $\mathcal{M}^{\mathrm{formal}}$ as the space consisting of $m + 1$ formal Laurent series of the following form:
\begin{align*}
	&\lambda_{0}(z)=z+a_{1}z^{-1}+a_{2}z^{-2}+\cdots,\\	
	&\lambda_{j}(z)=\hat{a}_{j,-1}(z-\varphi_{j})^{-1}+\hat{a}_{j,0} +\hat{a}_{j,1} (z-\varphi_{j})+\cdots, \quad j=1,\cdots,m.
\end{align*}
The principal hierarchy $\partial / \partial T^{\alpha, p}$ and its associated $ \tau $-structure are thus well-defined on $\mathcal{M}^{\mathrm{formal}}$ for the sector $(\alpha, p) \in \{ (h_{k, 0}, p), (h_{k, 1}, 0), (e, p) \}$. The $\tau$-cover for this hierarchy, which coincides with the genus-zero universal Whitham hierarchy, are given by:
$$\frac{\partial^2 \mathcal{F}}{\partial T^{\alpha, p} \partial T^{\beta, q}} = \Omega_{\alpha, p; \beta, q}(\mathbf{u}).$$
In this system, the coordinates $\mathbf{u} = \{u_{j, k}\}$ on the right-hand side are identified with the following derivatives of the prepotential $\mathcal{F}$:
\begin{equation}\label{replu}
\left\{ \begin{array}{ll} 
	u_{k, 0} \longmapsto \frac{\partial^2 \mathcal{F}}{\partial x \partial T^{h_{k, 1}, 0}} & (k \ge 1) \\[10pt]
	u_{k, j} \longmapsto (j - 1)! \frac{\partial^2 \mathcal{F}}{\partial x \partial T^{h_{k, 0}, j - 1}} & (k \ge 1, j \ge 1) \\[10pt]
	u_{0, j} \longmapsto (j - 1)! \frac{\partial^2 \mathcal{F}}{\partial x \partial T^{e, j - 1}} & (j \ge 1)
\end{array} \right. 
\end{equation}
where the spatial derivative $\partial / \partial x$ is identified with the unit flow:$$\frac{\partial}{\partial x} = \frac{\partial}{\partial T^{e, 0}}.$$

In \cite{takasaki2007universal}, this system is formulated in Hirota bilinear form and is shown to be equivalent to the dispersionless limit of the multicomponent KP hierarchy (see also \cite{teo2011multicomponent}).
\section{Stabilization of $H_{0; \mathbf{n}}$ and Related Hierarchies}

In this section, we revisit the Hurwitz Frobenius manifold $H_{0; \mathbf{n}}$ and the integrable hierarchy constructed by Basalaev, which is characterized by the intrinsic stability properties of the prepotential's derivatives. By leveraging the $\tau$-structure of the infinite-dimensional manifold $\mathcal{M}$ established in the preceding sections, we provide a direct identification between Basalaev's hierarchy and the genus-zero universal Whitham hierarchy. Notably, our approach offers an intrinsic derivation that is independent of the explicit formulations of the $\tau$-cover in \cite{takasaki2007universal}.

Applying the same methodology, we further establish an equivalence between the hierarchy arising from the stability of open WDVV solutions for $H_{0;\mathbf{n}}$ and the multi-component dispersionless modified KP (mKP) hierarchy. This result constitutes a multicomponent generalization of the integrable structures introduced in \cite{basalaev2022integrable}.

\subsection{Explicit formulas for the prepotential derivatives of $H_{0;\mathbf{n}}$}

Recall that for a given positive integer $m$ and a sequence of positive integers $\mathbf{n} = \left( n_0, n_1, \dots, n_m \right)$, the Hurwitz space $H_{0; \mathbf{n}}$ is defined as the set of functions of the form:
\[
\lambda(z) = z^{n_0} + a_{0,n_0-2}z^{n_0-2} + \cdots + a_{0,0} + \sum_{i=1}^{m}\sum_{j=1}^{n_i}a_{i,j}(z-a_{i,0})^{-j}
\]
where $a_{i,n_i} \neq 0$ for $i = 1, \dots, m$. The set of parameters $\{a_{0,i}\}_{i=0}^{n_0-2} \cup \{a_{i,j} \mid 1 \leq i \leq m, 0 \leq j \leq n_i\}$ constitutes a coordinate system on $H_{0;\mathbf{n}}$.

The space $H_{0; \mathbf{n}}$ carries a canonical Frobenius structure defined as follows.
For any \( \p', \p'', \p''' \in T_{\lambda(z)}H_{0;\mathbf{n}} \), the metric $\eta$ is given by the residue formula:
\[
\langle \p', \p'' \rangle_{\eta} := \eta(\p', \p'') = \sum_{|\lambda|<\infty} \res_{d\lambda=0} \frac{\p'(\lambda(z)dz) \p''(\lambda(z)dz)}{d\lambda(z)}
\]
and the $(0,3)$-type tensor $c$ is defined by:
\[
c(\p', \p'', \p''') := \sum_{|\lambda|<\infty} \res_{d\lambda=0} \frac{\p'(\lambda(z)dz) \p''(\lambda(z)dz) \p'''(\lambda(z)dz)}{d\lambda(z)dz}.
\]
The multiplication $\circ$ on the tangent space is uniquely determined by the relation $\eta(\partial' \circ \partial'', \partial''') = c(\partial', \partial'', \partial''')$. Furthermore, the unit vector field $e$ and the Euler vector field $E$ on $H_{0; \mathbf{n}}$ are characterized by:
$$\mathcal{L}_{e} \lambda(z) = 1, \quad \mathcal{L}_{E} \lambda(z) = \lambda(z) - \frac{z}{n_0} \lambda'(z)$$
 where $\mathcal{L}$ denotes the Lie derivative. This data endows $H_{0; \mathbf{n}}$ with the structure of a semisimple Frobenius manifold of conformal dimension $d = 1 - \frac{2}{n_0}$.

The flat coordinates of the metric $ \eta $, denoted by
\[
\mathbf{v} = \{v_{0,j}\}_{j=1}^{n_{0}-1} \cup \{v_{1,j}\}_{j=0}^{n_{1}} \cup \cdots \cup \{v_{m,j}\}_{j=0}^{n_{m}},
\]
are defined via the coefficients of the local expansions of the function $z(\lambda)$ near the poles of $\lambda(z)$. Specifically, we have:
\[
z = \left\{
\begin{aligned}
	&v_{i,0} + v_{i,1}\lambda^{-\frac{1}{n_{i}}} + \cdots, \quad &z &\to a_{i,0},\ i = 1, \cdots, m, \\
	&\lambda^{\frac{1}{n_{0}}} - v_{0,1}\lambda^{-\frac{1}{n_{0}}} - v_{0,2}\lambda^{-\frac{2}{n_{0}}} + \cdots, \quad &z &\to \infty,\ \ i = 0.
\end{aligned}
\right.
\]
The partial derivatives of $\lambda(z)$ with respect to these flat coordinates are given by:
\[
\frac{\partial \lambda(z)}{\partial v_{i,j}} = \left\{
\begin{aligned}
	&-(\lambda(z)^{-\frac{j}{n_{i}}}\lambda'(z))_{a_{i,0},\le -1},\quad i = 1, \cdots, m,\ j = 0, \cdots, n_{i}, \\
	&(\lambda(z)^{-\frac{j}{n_{0}}}\lambda'(z))_{\infty,\ge 0},\quad i = 0,\ j = 1, \cdots, n_{0}-1.
\end{aligned}
\right.
\]

Following \cite{aoyama1996topological,ma2023principal}, the Hamiltonian densities $\theta_{\alpha, p}$ for the Frobenius manifold $H_{0; \mathbf{n}}$ are defined via the following residues of $\lambda(z):$
\begin{align*}
	\theta_{v_{0,j},p}(z) =&-\frac{\Gamma\left(1-\frac{j}{n_{0}}\right)}{\Gamma\left(2+p-\frac{j}{n_{0}}\right)} \res_{z=\infty} \lambda(z)^{1+p-\frac{j}{n_{0}}} d z, \quad j = 1, \ldots, n_{0} - 1, \\
	\theta_{v_{k,j},p}(z) =&\frac{\Gamma\left(1-\frac{j}{n_{k}}\right)}{\Gamma\left(2+p-\frac{j}{n_{k}}\right)} \res_{z=a_{k,0}} \lambda(z)^{1+p-\frac{j}{n_{k}}} d z, \quad k = 1, \ldots, m, \ j = 0, \ldots, n_{k} - 1,\\
	\theta_{v_{k,n_{k}},p}(z) =& n_{k}\res_{z=a_{k,0}} \frac{\lambda(z)^{p}}{p!} \left( \log \left( \lambda(z)^{\frac{1}{n_{k}}}(z-a_{k,0}) \right) - \frac{c_{p}}{n_{k}} \right) dz\\
	&- n_{k}\res_{z=\infty} \frac{\lambda(z)^{p}}{p!} \left( \log \frac{\lambda(z)^{\frac{1}{n_{0}}}}{z-a_{k,0}} - \frac{c_{p}}{n_{0}} \right) dz + n_{k} \sum_{i'\ne k}\res_{z=a_{i',0}} \frac{\lambda(z)^p}{p!} \log(z-a_{k,0}) dz.
\end{align*}

By invoking the identities in \eqref{tausym}, the primary components of the $\tau$-structure associated with the flat coordinates $\{v_{k, j}\}$ can be evaluated explicitly:
$$\Omega_{v_{0, j}, 0; v_{0, l}, 0} = \frac{n_0^2}{(n_0 - j)(n_0 - l)} \mathrm{Res}_{z = \infty} \left( \lambda(z)^{1 - \frac{j}{n_0}} \right)_{\infty, \ge 0} d \left( \lambda(z)^{1 - \frac{l}{n_0}} \right),$$
$$\Omega_{v_{k, j}, 0; v_{k', l}, 0} = \frac{n_k n_{k'}}{(n_k - j)(n_{k'} - l)} \mathrm{Res}_{z = a_{k', 0}} \left( \lambda(z)^{1 - \frac{j}{n_k}} \right)_{a_{k, 0}, \le -1} d \left( \lambda(z)^{1 - \frac{l}{n_{k'}}} \right),$$
$$\Omega_{v_{0, j}, 0; v_{k, l}, 0} = - \frac{n_0 n_k}{(n_0 - j)(n_k - l)} \mathrm{Res}_{z = a_{k, 0}} \left( \lambda(z)^{1 - \frac{j}{n_0}} \right)_{\infty, \ge 0} d \left( \lambda(z)^{1 - \frac{l}{n_k}} \right).$$
$$\Omega_{v_{0, j}, 0; v_{k, n_k}, 0} = - \frac{n_0 n_k}{n_0 - j} \mathrm{Res}_{z = \infty} \lambda(z)^{1 - \frac{j}{n_0}} \frac{1}{z - a_{k, 0}} dz,$$
$$\Omega_{v_{k, j}, 0; v_{k', n_{k'}}, 0} = \frac{n_k n_{k'}}{n_k - j} \mathrm{Res}_{z = a_{k, 0}} \lambda(z)^{1 - \frac{j}{n_k}} \frac{1}{z - a_{k', 0}} dz.$$
$$\Omega_{v_{k, n_k}, 0; v_{k', n_{k'}}, 0} = n_k n_{k'} \log \left( a_{k, 0} - a_{k', 0} \right), \quad k \neq k',$$
$$\Omega_{v_{k, n_k}, 0; v_{k, n_k}, 0} = n_k^2 \log v_{k, 1}.$$
These components provide the second-order derivatives of the prepotential $F_{H_{0; \mathbf{n}}}$ via the following identification:
$$
\frac{\partial^{2} F_{H_{0;\mathbf{n}}}}{\partial v_{i,j} \partial v_{k,l}}=\Omega_{v_{i,j},0;v_{k,l},0}.
$$

\subsection{Stabilization of the Prepotential}\label{embd}
Consider the map from the Hurwitz manifold $H_{0;\mathbf{n}}$ to the formal space $\mathcal{M}^{\mathbf{formal}}$ defined by $\lambda_{i}(z) = \lambda(z)^{i/n_{i}}$ as $z \to a_{i, 0}$. This map induces the coordinate identification $u_{i, j} = u_{i, j}(\mathbf{v})$, which specifically satisfies:\\
$$u_{0,j} = v_{0,j}, \quad 1 \le j \le n_0-1,$$
$$u_{k,j} = v_{k,j}, \quad 1 \le k \le m, \quad 0 \le j \le n_k.$$
Under these identifications, the functions $\Omega_{v_{i, j}, 0; v_{k, r}, 0}$ on $H_{0;\mathbf{n}}$ are identified with the formal $\tau$-structure on $\mathcal{M}^{\mathrm{formal}}$ as follows:
$$\frac{1}{n_0^2} \Omega_{v_{0, n_0 - p}, 0; v_{0, n_0 - q}, 0}(\mathbf{v}) = (p - 1)! (q - 1)! \Omega_{e, p - 1; e, q - 1}(\mathbf{u}) \left. \right|_{u = v}, \quad n_0 \ge p + q ;$$
$$\frac{1}{n_k^2} \Omega_{v_{k, n_k - p}, 0; v_{k, n_k - q}, 0}(\mathbf{v}) = (p - 1)! (q - 1)! \Omega_{h_{k, 0}, p - 1; h_{k, 0}, q - 1}(\mathbf{u}) \left. \right|_{u = v}, \quad n_k \ge p + q + 1;$$
$$\frac{1}{n_k n_{k'}} \Omega_{v_{k, n_k - p}, 0; v_{k', n_{k'} - q}, 0}(\mathbf{v}) = (p - 1)! (q - 1)! \Omega_{h_{k, 0}, p - 1; h_{k', 0}, q - 1}(\mathbf{u}) \left. \right|_{u = v}, \quad n_k \ge p, \ n_{k'} \ge q,\ k\ne k';$$
$$\frac{1}{n_0 n_k} \Omega_{v_{0, n_0 - p}, 0; v_{k, n_k - q}, 0}(\mathbf{v}) = (p - 1)! (q - 1)! \Omega_{e, p - 1; h_{k, 0}, q - 1}(\mathbf{u}) \left. \right|_{u = v}, \quad n_0 \ge p, \ n_k \ge q;$$
$$\frac{1}{n_0 n_k} \Omega_{v_{0, n_0 - p}, 0; v_{k, n_k}, 0}(\mathbf{v}) = (p - 1)! \Omega_{e, p - 1; h_{k, 1}, 0}(\mathbf{u}) \left. \right|_{u = v}, \quad n_0 \ge p+1;$$
$$\frac{1}{n_k n_{k'}} \Omega_{v_{k, n_k - p}, 0; v_{k', n_{k'}}, 0}(\mathbf{v}) = (p - 1)! \Omega_{h_{k, 0}, p - 1; h_{k', 1}, 0}(\mathbf{u}) \left. \right|_{u = v}, \quad n_k \ge p,\ k\ne k';$$
$$\frac{1}{n_k^2} \Omega_{v_{k, n_k}, 0; v_{k, n_k - p}, 0}(\mathbf{v}) = (p - 1)! \Omega_{h_{k, 1}, 0; h_{k, 0}, p - 1}(\mathbf{u}) \left. \right|_{u = v}, \quad n_k \ge p + 1;$$
$$\frac{1}{n_k n_{k'}} \Omega_{v_{k, n_k}, 0; v_{k', n_{k'}}, 0}(\mathbf{v}) = \Omega_{h_{k, 1}, 0; h_{k', 1}, 0}(\mathbf{u}) \left. \right|_{u = v}.$$

The preceding identifications lead to the following stabilization theorem for the prepotential $F_H$ of the Hurwitz manifold $H_{0; \mathbf{n}}$:
\begin{thm}\label{preptau}
Let $\hat{v}_{i, j} = n_i v_{i, n_i - j}$ be the rescaled coordinates. The second derivatives of the prepotential $F_H$ are stable in the limit $n \to \infty$, where $n = \min \{ n_0, \ldots, n_m \}$. Specifically:
$$\lim_{n \to \infty} \frac{\partial^2 F_H}{\partial \hat{v}_{0, p} \partial \hat{v}_{0, q}} = (p - 1)! (q - 1)! \Omega_{e, p - 1; e, q - 1},\quad p,q\ge 1;$$
$$\lim_{n \to \infty} \frac{\partial^2 F_H}{\partial \hat{v}_{k, p} \partial \hat{v}_{k', q}} = (p - 1)! (q - 1)! \Omega_{h_{k, 0}, p - 1; h_{k', 0}, q - 1},\quad p,q\ge 1;$$
$$\lim_{n \to \infty} \frac{\partial^2 F_H}{\partial \hat{v}_{0, p} \partial \hat{v}_{k, q}} = (p - 1)! (q - 1)! \Omega_{e, p - 1; h_{k, 0}, q - 1},\quad p,q\ge 1;$$
$$\lim_{n \to \infty} \frac{\partial^2 F_H}{\partial \hat{v}_{0, p} \partial \hat{v}_{k, 0}} = (p - 1)! \Omega_{e, p - 1; h_{k, 1}, 0},\quad p\ge 1;$$
$$\lim_{n \to \infty} \frac{\partial^2 F_H}{\partial \hat{v}_{k, p} \partial \hat{v}_{k', 0}} = (p - 1)! \Omega_{h_{k, 0}, p - 1; h_{k', 1}, 0},\quad p\ge 1;$$
$$\lim_{n \to \infty} \frac{\partial^2 F_H}{\partial \hat{v}_{k, 0} \partial \hat{v}_{k', 0}} = \Omega_{h_{k, 1}, 0; h_{k', 1}, 0}$$
where the right-hand sides are evaluated at $u_{i, j} = v_{i, j} = \frac{1}{n_i} \hat{v}_{i, n_i - j}$.
\end{thm}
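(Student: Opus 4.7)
The plan is to reduce the stability statement to a direct combination of three ingredients already in place: the identification $\partial^{2} F_{H}/\partial v_{i,j}\partial v_{k,l} = \Omega_{v_{i,j},0;v_{k,l},0}(\mathbf{v})$ recorded just before Subsection \ref{embd}, the list of exact correspondences in Subsection \ref{embd} between these Hurwitz $\Omega$-functions and the $\tau$-structure of $\mathcal{M}^{\mathrm{formal}}$, and the finite-dependence property in Lemma \ref{taufini}. The overall strategy is unusual only in that there is nothing to approximate: for each fixed pair of external indices, the identification is eventually an \emph{exact} equality, so the "limit" is really a stabilization.

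First, I would apply the chain rule to the rescaling $\hat{v}_{i,j} = n_{i} v_{i, n_{i}-j}$, which gives $\partial/\partial \hat{v}_{i,p} = n_{i}^{-1}\,\partial/\partial v_{i, n_{i}-p}$ and therefore
$$\frac{\partial^{2} F_{H}}{\partial \hat{v}_{i,p}\,\partial \hat{v}_{k,q}} = \frac{1}{n_{i} n_{k}}\,\Omega_{v_{i,n_{i}-p},0;\, v_{k,n_{k}-q},0}(\mathbf{v}).$$
Next, I would match each of the six assertions in the theorem with the corresponding identity in Subsection \ref{embd}. For example, the first assertion is obtained from
$$\frac{1}{n_{0}^{2}}\Omega_{v_{0,n_{0}-p},0;\, v_{0,n_{0}-q},0}(\mathbf{v}) = (p-1)!(q-1)!\,\Omega_{e,p-1;\,e,q-1}(\mathbf{u})\big|_{u=v},$$
which holds as soon as $n_{0}\ge p+q$; the remaining five cases follow from the other identifications under analogous lower bounds on the relevant $n_{i}$.

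Finally, I would invoke Lemma \ref{taufini} to justify passing to the limit. For each fixed pair $(p,q)$, the right-hand side $\Omega_{\alpha,p-1;\,\beta,q-1}(\mathbf{u})$ depends on only a finite collection of coordinates $u_{i,j}$, with the indices $j$ bounded by an explicit function of $p$ and $q$ that is independent of $\mathbf{n}$. Once every $n_{i}$ exceeds the maximum of the thresholds imposed by the Subsection \ref{embd} identity and by Lemma \ref{taufini}, the coordinate substitution $u_{i,j} = v_{i,j} = n_{i}^{-1}\hat{v}_{i, n_{i}-j}$ is valid and the identification becomes exact. Hence the sequence in $n = \min\{n_{0},\ldots,n_{m}\}$ is eventually constant, and its limiting value is precisely the claimed component of the $\tau$-structure on $\mathcal{M}^{\mathrm{formal}}$.

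The main obstacle is purely bookkeeping: checking case by case that the rescaling factors $1/(n_{i}n_{k})$ produced by the chain rule align with the combinatorial prefactors $(p-1)!(q-1)!$ appearing on the right-hand sides in Subsection \ref{embd}, and that for each of the six cases the index-size constraint stated there is no weaker than the one dictated by Lemma \ref{taufini}. Once these two tables are written side by side, no further analytic input is required, and the proof reduces to a direct verification.
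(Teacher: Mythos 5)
Your proposal is correct and follows essentially the same route as the paper, which derives Theorem \ref{preptau} directly from the exact identifications listed in Subsection \ref{embd} together with the finite-dependence property of Lemma \ref{taufini}, so that for each fixed pair of indices the identity holds exactly once $n$ exceeds an explicit threshold and the limit is a stabilization. Your write-up merely makes explicit the chain-rule bookkeeping that the paper leaves implicit.
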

From the preceding stabilization theorem and the explicit form of the $\tau$-cover for the Whitham hierarchy discussed in Section \ref{whmtau}, we obtain the following:
\begin{cor}\label{stabhier}
 The $\tau$-cover of the Whitham hierarchy can be equivalently represented by the following system of equations:
	
$$\frac{\partial^2 \mathcal{F}}{\partial \sigma_{i, p} \partial \sigma_{j, q}} = p q \left. \left( \lim_{n \to \infty} \frac{\partial^2 F_{H}}{\partial \hat{v}_{i, p} \partial \hat{v}_{j, q}} \right) \right.,\quad p,q\ge 1;$$
$$\frac{\partial^2 \mathcal{F}}{\partial \sigma_{i, p} \partial \sigma_{j, 0}} = p \left. \left( \lim_{n \to \infty} \frac{\partial^2 F_{H}}{\partial \hat{v}_{i, p} \partial \hat{v}_{j, 0}} \right) \right.,\quad p\ge 1;$$
$$\frac{\partial^2 \mathcal{F}}{\partial \sigma_{i, 0} \partial \sigma_{j, 0}} = \left. \left( \lim_{n \to \infty} \frac{\partial^2 F_{H}}{\partial \hat{v}_{i, 0} \partial \hat{v}_{j, 0}} \right) \right.,$$
where the limits on the right-hand sides are subject to the identifications:
$$v_{k, r} = \frac{1}{r} \frac{\partial^2 \mathcal{F}}{\partial \sigma_{0, 1} \partial \sigma_{k, r}}\ (r\ge 1),\quad v_{k, 0} = \frac{\partial^2 \mathcal{F}}{\partial \sigma_{0, 1} \partial \sigma_{k, 0}}.$$
\end{cor}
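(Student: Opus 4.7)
The plan is to derive the corollary as a direct consequence of Theorem \ref{preptau} combined with the $\tau$-cover formula $\partial^2 \mathcal{F}/(\partial T^{\alpha,p}\partial T^{\beta,q}) = \Omega_{\alpha,p;\beta,q}(\mathbf{u})$ established in Section \ref{whmtau}, by translating between the two systems of time variables $\{T^{\alpha,p}\}$ and $\{\sigma^{i,p}\}$ and checking that the combinatorial factors line up.

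First, I would record the change of variables induced by the flow identifications displayed just before Theorem \ref{taufor}. These give $\partial/\partial\sigma^{i,p} = p!\,\partial/\partial T^{h_{i,0},p-1}$ for $i\ge 1, p\ge 1$, $\partial/\partial\sigma^{0,p} = p!\,\partial/\partial T^{e,p-1}$ for $p\ge 1$, and $\partial/\partial\sigma^{i,0} = \partial/\partial T^{h_{i,1},0}$. Applied to the $\tau$-cover formula, this immediately yields expressions of the form $\partial^2\mathcal{F}/(\partial\sigma^{i,p}\partial\sigma^{j,q}) = p!\,q!\,\Omega_{\alpha,p-1;\beta,q-1}(\mathbf{u})$ in the pure-positive sector, with the expected factor reductions when one or both indices are zero.

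Next, I would invoke Theorem \ref{preptau} to rewrite each limit $\lim_{n\to\infty}\partial^2 F_H/(\partial\hat{v}_{i,p}\partial\hat{v}_{j,q})$ as $(p-1)!(q-1)!\,\Omega_{\alpha,p-1;\beta,q-1}(\mathbf{u})|_{u=v}$ (and the analogous formulas for mixed or vanishing indices). Multiplying by the prefactor $pq$ (or $p$, or $1$) then matches the factorial on the $\sigma$-side, since $pq\,(p-1)!(q-1)! = p!\,q!$. So the two sides of each claimed identity agree as functions of the common coordinates $\mathbf{u} = \mathbf{v}$, provided the coordinate identifications of the corollary are compatible with those of \eqref{replu}.

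The final step is this compatibility check, which is routine but essential. Using $\partial/\partial\sigma^{0,1} = \partial/\partial T^{e,0} = \partial/\partial x$ together with the change of variables above, I compute
\[
\frac{1}{r}\,\frac{\partial^2 \mathcal{F}}{\partial\sigma^{0,1}\partial\sigma^{k,r}} = \frac{r!}{r}\,\frac{\partial^2\mathcal{F}}{\partial x\,\partial T^{h_{k,0},r-1}} = (r-1)!\,\frac{\partial^2\mathcal{F}}{\partial x\,\partial T^{h_{k,0},r-1}},
\]
which coincides with $u_{k,r}$ via \eqref{replu}, and under the embedding of Section \ref{embd} this equals $v_{k,r}$; the case $r=0$ is analogous and gives $v_{k,0}=u_{k,0}$. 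Assembling these pieces yields the corollary. There is no genuine obstacle here: the content lies entirely in Theorem \ref{preptau}, and the remaining work is a disciplined bookkeeping of factorials and identifications, the most delicate point being to verify that the low-index cases (where $p$ or $q$ equals zero, or $i=j$ in the logarithmic sector) are captured by the same formulas without spurious factors.
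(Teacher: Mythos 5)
Your proposal is correct and follows exactly the route the paper intends: the corollary is stated there as an immediate consequence of Theorem \ref{preptau} together with the $\tau$-cover formula and flow identifications of Section \ref{whmtau}, and your factorial bookkeeping ($pq\,(p-1)!(q-1)!=p!\,q!$) and the verification that $\tfrac{1}{r}\partial^2\mathcal{F}/\partial\sigma^{0,1}\partial\sigma^{k,r}$ reproduces the substitution \eqref{replu} are precisely the omitted details.
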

Consequently, we recover the results of \cite{basalaev2025genus}.

\subsection{Open extension and stable limits}
In \cite{ma2025solutions}, the author constructed a pair of solutions to the open WDVV equations for $\mathcal{M}$, along with the principal hierarchy for the associated flat $F$-manifold.

For the solution denoted by $F_{\mathcal{M}}^{o}$, the corresponding functions $\tilde{\theta}^{\mathcal{M}}_{\alpha, p}$ over the sectors $\{e\} \cup \mathbf{h} \cup \{s\}$ are given by:
$$\begin{aligned}
	\tilde{\theta}_{e, p}^{\mathcal{M}}
	&=-\frac{1}{(p+1)!} (a(s)^{p+1})_{\infty,\le -1}, \quad \tilde{\theta}_{h_{k,0}, p}^{\mathcal{M}}=-\frac{1}{(p+1)!}(\hat{a}(s)^{p+1})_{\varphi_{k},\le -1},\\
	\tilde{\theta}^{\mathcal{M}}_{s, p}& = \frac{a(s)^{p+1}}{(p+1)!},\quad \tilde{\theta}^{\mathcal{M}}_{h_{k,1}, 0} = \log(s-\varphi_k).
\end{aligned}$$
The associated principal hierarchy is governed by the following system:
$$\frac{\partial \vec{a}}{\partial \widetilde{T}^{h_{k,0},p}}=\frac{\partial \vec{a}}{\partial T^{h_{k,0},p}},\quad \frac{\partial s}{\partial \widetilde{T}^{h_{k,0},p}}=- \frac{1}{(p+1)!} \partial_{x}\Bigl( \hat{a}(s)^{1+p} \Bigr)_{\varphi_{k},\le -1};$$
$$\frac{\partial \vec{a}}{\partial \widetilde{T}^{e,p}}=\frac{\partial \vec{a}}{\partial T^{e,p}},\quad \frac{\partial s}{\partial \widetilde{T}^{e,p}}=- \frac{1}{(p+1)!} \partial_{x}\Bigl( a(s)^{1+p} \Bigr)_{\infty,\le -1};$$
$$\frac{\partial \vec{a}}{\partial \widetilde{T}^{s,p}}=0,\quad \frac{\partial s}{\partial \widetilde{T}^{s,p}}= \frac{1}{(p+1)!} \partial_{x}\Bigl( a(s)^{1+p} \Bigr);$$
$$\frac{\partial \vec{a}}{\partial \widetilde{T}^{h_{k,1},0}}=\frac{\partial \vec{a}}{\partial T^{h_{k,1},0}},\quad \frac{\partial s}{\partial \widetilde{T}^{h_{k,1},0}}= \partial_{x}\Bigl( \log(s-\varphi_k) \Bigr).$$

The $\tau$-structure of this hierarchy is defined by the following system for $( \mathcal{F}, \mathcal{F}^o )$: 
$$\frac{\partial^2 \mathcal{F}}{\partial T^{\alpha, p} \partial T^{\beta, q}} = \Omega_{\alpha, p; \beta, q}(\mathbf{u}),\quad \frac{\p \mathcal{F}}{\p T^{s,0}}=0,$$
$$
\frac{\p \mathcal{F}^{o}}{\p T^{\alpha,p}}=\tilde{\theta}^{\mathcal{M}}_{\alpha,p}(\textbf{u},s),\quad \frac{\p \mathcal{F}^{o}}{\p T^{s,p}}=\tilde{\theta}^{\mathcal{M}}_{s,p}(\textbf{u},s)
$$
where the variables $\mathbf{u}$ are identified through the mapping in \eqref{replu}, and $s = \partial \mathcal{F}^o / \partial x$. 

We now consider the stability of the solution $F_H^o$ to the open WDVV equations for the Hurwitz Frobenius manifold $H_{0; \mathbf{n}}$. Following \cite{ma2025solutions}, the functions $\tilde{\theta}_{v_{i, j}, 0}^H$ associated with the coordinates $\mathbf{v}$ and the open variable $s$ are given by:
$$\tilde{\theta}_{v_{0, j}, 0}^{H} = \frac{n_0}{n_0 - j} \left( \lambda(s)^{\frac{n_0 - j}{n_0}} \right)_{\infty, \ge 0},$$
$$\tilde{\theta}^{H}_{v_{i, j}, 0} = - \frac{n_i}{n_i - j} \left( \lambda(s)^{\frac{n_i - j}{n_i}} \right)_{a_{i, 0}, \le -1},$$
$$\tilde{\theta}^{H}_{v_{i, n_i}, 0} = n_i \log \left( s - a_{i, 0} \right),$$
$$\tilde{\theta}^{H}_{s, 0} = \lambda(s).$$

By utilizing the embedding of $H_{0; \mathbf{n}}$ into $\mathcal{M}^{\mathrm{formal}}$ described in Section \ref{embd}, we establish the following identities for $p \ge 1$:
$$\frac{1}{n_{0}} \tilde{\theta}_{v_{0, n_{0}-p}, 0}^{H}(\mathbf{v},s) = (p-1)! \left( \tilde{\theta}_{s, p-1}^{\mathcal{M}}(\mathbf{u},s) + \tilde{\theta}_{e, p-1}^{\mathcal{M}}(\mathbf{u},s) \right) \Bigg|_{\mathbf{u}=\mathbf{v}},\quad n_0 \ge  p;$$
$$\frac{1}{n_{k}} \tilde{\theta}_{v_{k, n_{k}-p}, 0}^{H}(\mathbf{v},s) = (p-1)! \left( \tilde{\theta}_{h_{k,0}, p-1}^{\mathcal{M}}(\mathbf{u},s)  \right) \Bigg|_{\mathbf{u}=\mathbf{v}},\quad n_k \ge p;$$
$$\frac{1}{n_{i}} \tilde{\theta}^{H}_{v_{i,n_{i}}, 0}(\mathbf{v},s) = \tilde{\theta}^{\mathcal{M}}_{h_{i, 1}, 0}(\mathbf{u},s) \Bigg|_{\mathbf{u}=\mathbf{v}}.$$
By using the relation $\partial F_H^o / \partial v_{i, j} = \tilde{\theta}^H_{v_{i, j}, 0}$, the first derivatives of the open potential with respect to the rescaled coordinates $\hat{v}_{i, p} = n_i v_{i, n_i - p}$ satisfy the following stability conditions in the limit $n \to \infty$:
$$\lim_{n \to \infty} \frac{\partial F_H^o}{\partial \hat{v}_{0, p}} = (p - 1)! \left. \left( \tilde{\theta}_{s, p - 1}^{\mathcal{M}}(\mathbf{u}) + \tilde{\theta}_{e, p - 1}^{\mathcal{M}}(\mathbf{u}) \right) \right|_{\mathbf{u} = \mathbf{v}};$$
$$\lim_{n \to \infty} \frac{\partial F_H^o}{\partial \hat{v}_{k, p}} = (p - 1)! \left. \tilde{\theta}_{h_{k, 0}, p - 1}^{\mathcal{M}}(\mathbf{u}) \right|_{\mathbf{u} = \mathbf{v}};$$
$$\lim_{n \to \infty} \frac{\partial F_H^o}{\partial \hat{v}_{k, 0}} = \left. \tilde{\theta}_{h_{k, 1}, 0}^{\mathcal{M}}(\mathbf{u}) \right|_{\mathbf{u} = \mathbf{v}}.$$

These identities establish the identification of the hierarchy arising from the stability limit with the open extension of the Whitham hierarchy.
\begin{cor}\label{opencor}
Consider the system for the pair $(\mathcal{F}, \mathcal{F}^o)$, where $\mathcal{F}$ satisfies the equations in Corollary \ref{stabhier} and $\mathcal{F}^o$ is defined by the following equations for $p\ge 1$:
$$\frac{\partial \mathcal{F}^o}{\partial \sigma_{0, p}} = p \left( \lim_{n \to \infty} \frac{\partial F_H^o}{\partial \hat{v}_{0, p}} \right);$$
$$\frac{\partial \mathcal{F}^o}{\partial \sigma_{k, p}} = p  \left( \lim_{n \to \infty} \frac{\partial F^o_{H}}{\partial \hat{v}_{k, p}} \right); $$
$$\frac{\partial \mathcal{F}^o}{\partial \sigma_{k, 0}} = \left( \lim_{n \to \infty} \frac{\partial F^o_{H}}{\partial \hat{v}_{k, 0}} \right)$$
where the variables on the right-hand side are evaluated under the substitutions:
$$\hat{v}_{k, r} = \frac{1}{r} \partial_x \left( \frac{\partial \mathcal{F}}{\partial \sigma_{k, r}} \right), \quad s = \partial_x \mathcal{F}^o.$$
Consequently, this system coincides with the open $\tau$-structure of the Whitham hierarchy via the following identification:
$$\frac{\partial}{\partial \sigma_{0, p}} = p! \left( \frac{\partial}{\partial \widetilde{T}^{e, p-1}} + \frac{\partial}{\partial \widetilde{T}^{s, p-1}} \right);$$
$$\frac{\partial}{\partial \sigma_{k, p}} = p! \frac{\partial}{\partial \widetilde{T}^{h_{k, 0}, p-1}};$$
$$\frac{\partial}{\partial \sigma_{k, 0}} = \frac{\partial}{\partial \widetilde{T}^{h_{k, 1}, 0}}.$$
\end{cor}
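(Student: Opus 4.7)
The plan is to verify each of the three families of equations defining $\mathcal{F}^o$ in the corollary by substituting the stability identities derived immediately before the corollary statement, and to check that these reduce, under the proposed flow identification, to the open $\tau$-structure of $\mathcal{M}$.

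First I would dispatch the closed sector. Since $\mathcal{F}$ is independent of the open times $\widetilde{T}^{s,p}$ (as for $\tau_1$ in Section 2), the identification $\partial/\partial\sigma_{0,p}=p!(\partial/\partial\widetilde{T}^{e,p-1}+\partial/\partial\widetilde{T}^{s,p-1})$ reduces on $\mathcal{F}$ to $p!\,\partial/\partial\widetilde{T}^{e,p-1}$, which is precisely the identification used in Corollary \ref{stabhier}; the remaining two closed identifications coincide with those of Corollary \ref{stabhier} directly. Hence the equations for $\mathcal{F}$ in Corollary \ref{opencor} follow from Corollary \ref{stabhier}.

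Next, for the open sector, I would invoke the three stability identities
\begin{align*}
\lim_{n\to\infty}\frac{\partial F_H^o}{\partial\hat{v}_{0,p}}&=(p-1)!\bigl(\tilde{\theta}^{\mathcal{M}}_{s,p-1}+\tilde{\theta}^{\mathcal{M}}_{e,p-1}\bigr),\\
\lim_{n\to\infty}\frac{\partial F_H^o}{\partial\hat{v}_{k,p}}&=(p-1)!\,\tilde{\theta}^{\mathcal{M}}_{h_{k,0},p-1},\\
\lim_{n\to\infty}\frac{\partial F_H^o}{\partial\hat{v}_{k,0}}&=\tilde{\theta}^{\mathcal{M}}_{h_{k,1},0},
\end{align*}
established in the preceding paragraph. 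Multiplying by the prescribed scaling $p$ (respectively $1$ in the logarithmic case), the right-hand sides of the corollary's three equations for $\mathcal{F}^o$ become $p!(\tilde{\theta}^{\mathcal{M}}_{e,p-1}+\tilde{\theta}^{\mathcal{M}}_{s,p-1})$, $p!\,\tilde{\theta}^{\mathcal{M}}_{h_{k,0},p-1}$, and $\tilde{\theta}^{\mathcal{M}}_{h_{k,1},0}$ respectively. On the other hand, the open $\tau$-structure on $\mathcal{M}$ gives $\partial\mathcal{F}^o/\partial\widetilde{T}^{\alpha,p}=\tilde{\theta}^{\mathcal{M}}_{\alpha,p}$, so the proposed flow identifications reproduce exactly these expressions on the left-hand side, yielding the desired equivalence term by term.

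The remaining task is the compatibility of the variable substitutions. The assignment $\hat{v}_{k,r}=(1/r)\partial_x(\partial\mathcal{F}/\partial\sigma_{k,r})$ for $r\ge 1$ translates, via the flow identification and the closed $\tau$-cover, into $u_{k,r}=(r-1)!\,\partial^2\mathcal{F}/(\partial x\,\partial\widetilde{T}^{h_{k,0},r-1})$ as prescribed by \eqref{replu}; the analogous checks for the $u_{0,p}$ family and the $r=0$ logarithmic case are similar; and $s=\partial_x\mathcal{F}^o$ is common to both systems by definition. The main obstacle, such as it is, lies in carefully tracking the index reversal $\hat{v}_{i,p}=n_iv_{i,n_i-p}$ and the factorial rescalings in \eqref{replu} through the stable limit, and in confirming that the $s$-flow contribution $\partial/\partial\widetilde{T}^{s,p-1}$ appearing in the closed-sector identification indeed vanishes on $\mathcal{F}$; the finite-dependency property (Lemma \ref{taufini}) then guarantees that in the limit $n\to\infty$ the coordinates $\{\hat{v}_{i,p}\}_{p\ge 0}$ may be treated as independent formal Laurent coordinates on $\mathcal{M}^{\mathrm{formal}}$, so each identity is well-defined in the stable limit.
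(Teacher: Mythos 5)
Your proposal is correct and follows essentially the same route as the paper: the corollary is presented there as a direct consequence of the stability identities for $\partial F_H^o/\partial\hat{v}_{i,p}$ in terms of the $\tilde{\theta}^{\mathcal{M}}_{\alpha,p}$ established immediately beforehand, combined with the defining relations $\partial\mathcal{F}^o/\partial T^{\alpha,p}=\tilde{\theta}^{\mathcal{M}}_{\alpha,p}$ of the open $\tau$-structure and Corollary \ref{stabhier} for the closed sector. Your additional checks (vanishing of the $\widetilde{T}^{s,p}$-flows on $\mathcal{F}$ and compatibility of the substitutions with \eqref{replu}) are exactly the implicit steps the paper relies on.
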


  We remark that in Corollary \ref{opencor}, the spatial derivative $\partial_x$ is identified with $\partial / \partial \widetilde{T}^{e, 0}$. However, this flow is absent from the system spanned by the $\sigma$-flows, rendering the $\sigma$-description of the open $\tau$-structure incomplete. This structural deficiency is rooted in the instability of the derivative $\partial F_H^o / \partial s$ in the $n \to \infty$ limit, which corresponds to the missing flows along the open sectors $\partial / \partial \widetilde{T}^{s, p}$.

Upon specializing to $m = 0$, our construction reduces to the hierarchy associated with the type $A$ singularity established in \cite{basalaev2021open}. While the Hirota bilinear equations for this hierarchy were previously derived in that work, our universal construction provides an intrinsic Lax representation. 

Analogous results hold for the alternative solution to the open WDVV equations constructed in \cite{ma2025solutions}, up to linear combinations of the flows. For the sake of brevity, these derivations are not explicitly presented here.

\section{Stabilization of D-type prepotentials}
We now consider the reduction of the foregoing results to the submanifold $ H_{0, \mathbf{n}'}^{\mathrm{even}} $, which, in certain cases, coincides with the Frobenius manifold associated with the D-type singularity \cite{zuo2007frobenius}. Assume there exist positive integers $ m' $ and $\mathbf{n}' = (n_0', n_1', \dots, n_{m'}' )$ such that the following relations hold:
$$
m=2m'-1,\quad n_{0}=2n_{0}',\quad n_{1}=2n_{1}',\quad n_{2j-2}= n_{2j-1}=n_{j}',\quad 2\le j\le m'.
$$
Let $H_{0, \mathbf{n}'}^{\mathrm{even}}$ be the submanifold of $H_{0, \mathbf{n}}$ consisting of superpotentials of the form:
\begin{equation}\label{Dckpsup}
	\lambda(z)=z^{2n_{0}'}+\sum_{j=0}^{n_{0}'-1}b_{0,j}z^{2j}+\sum_{j=1}^{n_{1}'}b_{1,j}z^{-2j}+\sum_{i=2}^{m'}\sum_{j=1}^{n_{i}'}b_{i,j}(z^2-b_{i,0})^{-j}.
\end{equation}
This submanifold is defined by the parity symmetry $\lambda(z) = \lambda(-z)$. It is established that $H_{0, \mathbf{n}'}^{\mathrm{even}}$ inherits the Frobenius manifold structure from $H_{0, \mathbf{n}}$ \cite{ma2023principal}.

In terms of flat coordinates, the parity symmetry induces the following constraints on $ H_{0, \mathbf{n}} $:
\begin{align*}
	&v_{0,2} = v_{0,4} = \cdots = v_{0,2n_{0}'-2} = 0; \\
	&v_{1,0} = v_{1,2} = \cdots = v_{1,2n_{1}'} = 0; \\
	&v_{2i-2,j} = -v_{2i-1,j}, \quad  2 \leq i \leq m', \ 0 \leq j \leq n_{i}'.
\end{align*}
Consequently, the flat coordinates for $H_{0, \mathbf{n}'}^{\mathrm{even}}$ are identified as:
$$
\mathbf{v}^{even} = \{v_{0,2j-1}\}_{j=1}^{n_{0}'} \cup \{v_{1,2j-1}\}_{j=1}^{n_{1}'} \cup \{v_{2,j}\}_{j=0}^{n_{2}'} \cup \{v_{4,j}\}_{j=0}^{n_{3}'} \cup \cdots \cup \{v_{2m'-2,j}\}_{j=0}^{n_{m'}'}.
$$

Restricting the densities from $H_{0, \mathbf{n}}$ to the submanifold $H_{0, \mathbf{n}'}^{\mathrm{even}}$ yields
\begin{align*}
	&\theta_{v_{0,2j-1},p}^{even}=\theta_{v_{0,2j-1},p},\quad j=1,2,\cdots,n_{0}';\\
	&\theta_{v_{1,2j-1},p}^{even}=\theta_{v_{1,2j-1},p},\quad j=1,2,\cdots,n_{1}';
\end{align*}
and	
$$
\theta_{v_{2i-2,j},p}^{even}=\theta_{v_{2i-2,j},p}-\theta_{v_{2i-1,j},p},\quad i=2,\cdots,m',\ j=0,\cdots,n_{i}'. 
$$
The restricted flows then constitute the principal hierarchy for $H_{0, \mathbf{n}'}^{\mathrm{even}}$. The associated $\tau$-structure for the principal hierarchy is then characterized by
$$
\Omega^{even}_{v_{0,2j-1},0;v_{1,2k-1},0}=\Omega_{v_{0,2j-1},0;v_{1,2k-1},0};
$$
$$
\Omega^{even}_{v_{2i-2,j},0;v_{0,2k-1},0}=\Omega_{v_{2i-2,j},0;v_{0,2k-1},0}-\Omega_{v_{2i-1,j},0;v_{0,2k-1},0}
$$
and 
$$\Omega^{\mathrm{even}}_{v_{2i - 2, j}, 0; v_{2l - 2, k}, 0} = \Omega_{v_{2i - 2, j}, 0; v_{2l - 2, k}, 0} - \Omega_{v_{2i - 1, j}, 0; v_{2l - 2, k}, 0} - \Omega_{v_{2i - 2, j}, 0; v_{2l - 1, k}, 0} + \Omega_{v_{2i - 1, j}, 0; v_{2l - 1, k}, 0}.$$
The prepotential $F_{H^{\mathrm{even}}}$ of the Frobenius manifold $H_{0, \mathbf{n}'}^{\mathrm{even}}$ satisfies
$$
\frac{\p^{2}F_{H^{even}}}{\p v_{i,j}\p v_{k,l}}=\Omega^{even}_{v_{i,j},0;v_{k,l},0},\quad v_{i,j},v_{k,l}\in\mathbf{v}^{even}.
$$

We now pass to the infinite-dimensional setting. Let $\mathcal{M}^{\mathrm{even}} \subset \mathcal{M}^{\mathrm{formal}}$ be the submanifold defined by the following symmetry constraints:
$$-\lambda_{0}(z) = \lambda_{0}(-z), \quad -\lambda_{1}(z) = \lambda_{1}(-z), \quad \lambda_{2j-2}(-z) = \lambda_{2j-1}(z), \quad j = 2, 3, \dots, m'.$$
Subject to these symmetries, the Hamiltonian densities
$$\theta_{e, 2p}, \quad \theta_{h_{1, 0}, 2p}, \quad \theta_{h_{2k - 2, 0}, p} - \theta_{h_{2k - 1, 0}, p}, \quad k = 2, \dots, m'$$
and the corresponding flows of the principal hierarchy
$$\frac{\partial}{\partial T^{e, 2p}}, \quad \frac{\partial}{\partial T^{h_{1, 0}, 2p}}, \quad \frac{\partial}{\partial T^{h_{2k - 2, 0}, p}} - \frac{\partial}{\partial T^{h_{2k - 1, 0}, p}}, \quad k = 2, \dots, m'$$
descend to a well-defined restriction on $\mathcal{M}^{\mathrm{even}}$. Furthermore, the induced $ \tau $-structure on $\mathcal{M}^{\mathrm{even}}$ is prescribed by the following reduction rules: 
$$\Omega^{\mathrm{even}}_{\alpha, 2p; \beta, 2q} = \Omega_{\alpha, 2p; \beta, 2q},\quad \alpha, \beta \in \{e, h_{1, 0}\};$$
$$\Omega^{\mathrm{even}}_{\alpha, 2p; h_{k, 0}, q} = \Omega_{\alpha, 2p; h_{2k - 2, 0}, q} - \Omega_{\alpha, 2p; h_{2k - 1, 0}, q},\quad \alpha \in \{e, h_{1, 0}\};$$
$$\Omega^{\mathrm{even}}_{h_{k, 0}, p; h_{l, 0}, q} = \Omega_{h_{2k - 2, 0}, p; h_{2l - 2, 0}, q} - \Omega_{h_{2k - 2, 0}, p; h_{2l - 1, 0}, q} - \Omega_{h_{2k - 1, 0}, p; h_{2l - 2, 0}, q} + \Omega_{h_{2k - 1, 0}, p; h_{2l - 1, 0}, q}.$$
for $k, l \in \{2, \dots, m'\}.$ 

This reduced principal hierarchy corresponds to the Frobenius manifold structure on
 $\hat{\mathcal{M}}^{\mathrm{even}}$ \cite{wu2012class,ma2025solutions},
consisting of Laurent series of the form:
$$\hat{\lambda}_{0}(z) = z^2 + a_{0} + a_{2}z^{-2} + \dots, \quad \hat{\lambda}_{1}(z) = \hat{a}_{1,-2}z^{-2} + \hat{a}_{1,0} + \dots,$$
$$\hat{\lambda}_{2j-2}(z) = \hat{\lambda}_{2j-1}(-z) = \hat{a}_{j,-1}(z - \varphi_{j})^{-1} + \hat{a}_{j,0} + \hat{a}_{j,1}(z - \varphi_{j}) + \dots, \quad j = 1, \dots, m'.$$
The isomorphism from $\mathcal{M}^{\mathrm{even}}$ to $\hat{\mathcal{M}}^{\mathrm{even}}$ is defined via the identification:$$\hat{\lambda}_{0}(z) = \lambda_{0}(z)^{2}, \quad \hat{\lambda}_{1}(z) = \lambda_{1}(z)^{2}, \quad \hat{\lambda}_{j}(z) = \lambda_{j}(z).$$
For $ m' = 1 $, this principal hierarchy coincides with the dispersionless two-component BKP hierarchy \cite{wu2012class}.

By analogy with Theorem \ref{preptau} and Corollary \ref{stabhier}, the system of equations defined via the stability limit for $F_{H^{even}}$ is given as follows:
\begin{cor}\label{evenstab}
	The hierarchy governed by the following equations is well-defined:
		$$\frac{\partial^{2} \mathcal{F}^{\mathrm{even}}}{\partial \sigma_{i, p} \partial \sigma_{j, q}} = p q \left. \left( \lim_{n \to \infty} \frac{\partial^{2} F_{H^{\mathrm{even}}}}{\partial \hat{v}_{i, p} \partial \hat{v}_{j, q}} \right) \right|_{\hat{v}_{k, r} = \frac{1}{r} \frac{\partial^{2} \mathcal{F}^{\mathrm{even}}}{\partial \sigma_{0, 1} \partial \sigma_{k, r}}};$$
		$$\frac{\partial^{2} \mathcal{F}^{\mathrm{even}}}{\partial \sigma_{i, p} \partial \sigma_{j, 0}} = p \left. \left( \lim_{n \to \infty} \frac{\partial^{2} F_{H^{\mathrm{even}}}}{\partial \hat{v}_{i, p} \partial \hat{v}_{j, 0}} \right) \right|_{\hat{v}_{k, 0} = \frac{\partial^{2} \mathcal{F}^{\mathrm{even}}}{\partial \sigma_{0, 1} \partial \sigma_{k, 0}}};$$
	$$\frac{\partial^{2} \mathcal{F}^{\mathrm{even}}}{\partial \sigma_{i, 0} \partial \sigma_{j, 0}} = \left. \left( \lim_{n \to \infty} \frac{\partial^{2} F_{H^{\mathrm{even}}}}{\partial \hat{v}_{i, 0} \partial \hat{v}_{j, 0}} \right) \right|_{\hat{v}_{k, 0} = \frac{\partial^{2} \mathcal{F}^{\mathrm{even}}}{\partial \sigma_{0, 1} \partial \sigma_{k, 0}}},$$
	where the index of $\sigma$ belongs to the set $\{ (0, 2k - 1) \}_{k \ge 1} \cup \{ (1, 2k - 1) \}_{k \ge 1} \cup \{ (2k - 2, p) \}_{k = 2, \ldots, m', p \ge 0}$.
	
Furthermore, this system constitutes a $\tau$-cover of the principal hierarchy for
	$\mathcal{M}^{even}$, with the flows identified as follows:
$$\frac{\partial}{\partial \sigma_{0, p}} = p!  \frac{\partial}{\partial T^{e, p-1}},$$
$$\frac{\partial}{\partial \sigma_{1, p}} = p! \frac{\partial}{\partial T^{h_{1, 0}, p-1}},$$
$$\frac{\partial}{\partial \sigma_{i, p}} = p! \left( \frac{\partial}{\partial T^{h_{i, 0}, p-1}} - \frac{\partial}{\partial T^{h_{i+1, 0}, p-1}} \right),$$
$$\frac{\partial}{\partial \sigma_{i, 0}} = \frac{\partial}{\partial T^{h_{i, 1}, 0}} - \frac{\partial}{\partial T^{h_{i+1, 1}, 0}}$$
for $i \in \{2, 4, \ldots, 2m' - 2\}$.
\end{cor}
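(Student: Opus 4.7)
The plan is to reduce Corollary \ref{evenstab} to an application of Theorem \ref{preptau} and Corollary \ref{stabhier} combined with the reduction rules for $\Omega^{\mathrm{even}}$ listed immediately before the statement. The overall strategy is the same as in the non-even case: establish a Hurwitz-type stabilization statement at the level of $F_{H^{\mathrm{even}}}$, then transplant it to the $\tau$-cover of the infinite-dimensional reduced principal hierarchy via the coordinate identifications $u_{i,j}=v_{i,j}$.

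First, I would restrict the identifications of Section \ref{embd} to the even submanifold. The embedding $H_{0;\mathbf{n}}\hookrightarrow\mathcal{M}^{\mathrm{formal}}$ takes $H_{0,\mathbf{n}'}^{\mathrm{even}}$ into the subspace $\mathcal{M}^{\mathrm{even}}$, because the parity constraint $\lambda(z)=\lambda(-z)$ forces $\lambda_{0}(z)^{2n_0'}=\lambda_{0}(-z)^{2n_0'}$ and analogous relations at the other poles, which under the identifications $\lambda_{i}(z)=\lambda(z)^{i/n_i}$ translate precisely into the symmetries defining $\mathcal{M}^{\mathrm{even}}$. Under this embedding, the rescaled coordinates $\hat{v}_{i,p}=n_{i}v_{i,n_{i}-p}$ on $H^{\mathrm{even}}$ that survive the parity constraint correspond bijectively to the surviving $u$-coordinates on $\mathcal{M}^{\mathrm{even}}$.

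Second, I would express $\partial^{2}F_{H^{\mathrm{even}}}/\partial\hat{v}_{i,p}\partial\hat{v}_{j,q}$ via the prescribed linear combinations of $\Omega_{v_{*,*},0;v_{*,*},0}$ dictated by the restriction rules for the Hamiltonian densities $\theta^{\mathrm{even}}_{\alpha,p}$. Applying Theorem \ref{preptau} termwise, each summand stabilizes as $n=\min\{n_{0},\dots,n_{m}\}\to\infty$ to the corresponding $\Omega_{\alpha,p-1;\beta,q-1}(\mathbf{u})$ evaluated at $u=v$. Assembling these limits with the same linear combinations yields exactly the reduced components $\Omega^{\mathrm{even}}_{\alpha,p;\beta,q}$ defined in the paragraph preceding the corollary. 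This proves that the displayed system of equations for $\mathcal{F}^{\mathrm{even}}$ is well-defined and coincides with the $\tau$-cover equations $\partial^{2}\log\tau/\partial T^{A,p}\partial T^{B,q}=\Omega^{\mathrm{even}}_{A,p;B,q}$ of the reduced principal hierarchy, once we rewrite the coordinate substitution $\hat{v}_{k,r}=\frac{1}{r}\partial_{x}(\partial\mathcal{F}^{\mathrm{even}}/\partial\sigma_{k,r})$ in terms of $u$-coordinates via the mapping \eqref{replu}.

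Third, the flow identifications follow by composing the unreduced identifications from Corollary \ref{stabhier} with the surviving sectors. The $\sigma$-flows labeled $(0,2k-1)$ and $(1,2k-1)$ are exactly the even-index flows that restrict to $\mathcal{M}^{\mathrm{even}}$, giving $\partial/\partial\sigma_{0,p}=p!\,\partial/\partial T^{e,p-1}$ and $\partial/\partial\sigma_{1,p}=p!\,\partial/\partial T^{h_{1,0},p-1}$; for the paired sectors $i=2,4,\dots,2m'-2$, the restriction rule $\theta^{\mathrm{even}}_{v_{2i-2,j},p}=\theta_{v_{2i-2,j},p}-\theta_{v_{2i-1,j},p}$ produces the difference of flows displayed in the corollary. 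The main obstacle I anticipate is bookkeeping the double indexing convention: the Hurwitz label $2i-2$ corresponds to $\hat{\lambda}_{2i-2}=\lambda_{2i-2}^{2}$ with rescaling $n_{2i-2}=n_{i}'$, whereas the $\sigma$-flows are indexed by $i$ rather than $2i-2$. Carefully reconciling these labels, verifying that odd-index coordinates vanish consistently under the embedding, and confirming that the spatial derivative $\partial_{x}=\partial/\partial T^{e,0}$ is compatible with $\partial/\partial\sigma_{0,1}$ on $\mathcal{M}^{\mathrm{even}}$ (which requires checking that the $p=0$ term in the first family gives the unit flow), are the routine but error-prone steps where care is needed.
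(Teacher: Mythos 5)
Your proposal is correct and follows essentially the same route as the paper, which itself states this corollary "by analogy" with Theorem \ref{preptau} and Corollary \ref{stabhier} after setting up exactly the reduction rules (for the densities $\theta^{\mathrm{even}}$, the components $\Omega^{\mathrm{even}}$, and the symmetry constraints on the flat coordinates) that you invoke. Your termwise application of the stabilization theorem to the linear combinations defining $\Omega^{\mathrm{even}}$, together with the index bookkeeping for the surviving odd-index coordinates and paired sectors, is precisely the intended argument.
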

For $ m' = 1 $, we recover the results established in \cite{basalaev2021integrable,basalaev2024b}.

The reduction scheme extends naturally to the open extension. Following the results in \cite{ma2025solutions,ma2023principal}, the solution $F_{H}^{o}$ to the open WDVV equation for $ H_{0, \mathbf{n}} $, together with its associated principal hierarchy, admits a consistent restriction to $H_{0, \mathbf{n}'}^{\mathrm{even}}$.
 The function $\tilde{\theta}^{H^{even}}_{v_{i,j},0}$ associated to open potential $F_{H^{\mathrm{even}}}^{o}$ for $H_{0, \mathbf{n}'}^{\mathrm{even}}$ are given by 
$$\tilde{\theta}^{H^{\mathrm{even}}}_{v_{k, 2j - 1}, 0} = \tilde{\theta}^{H}_{v_{k, 2j - 1}, 0},\quad k=0,1;$$
$$ \tilde{\theta}^{H^{\mathrm{even}}}_{v_{2i - 2, j}, 0} = \tilde{\theta}^{H}_{v_{2i - 2, j}, 0} - \tilde{\theta}^{H}_{v_{2i - 1, j}, 0},\quad i=2,\cdots,m';$$
and
$$\tilde{\theta}^{H^{\mathrm{even}}}_{s, 0} = \tilde{\theta}^{H}_{s, 0}.$$

In the infinite-dimensional setting, as demonstrated in \cite{ma2025solutions}, there exists a pair of solutions to the open WDVV equations  for $\hat{\mathcal{M}}^{even}$. The functions $\tilde{\theta}^{\hat{\mathcal{M}}^{even}}_{\alpha,p}$ associated with one of these solutions, $ F^{o}_{\hat{\mathcal{M}}^{\mathrm{even}}} $, are given by:
$$\tilde{\theta}^{\hat{\mathcal{M}}^{\mathrm{even}}}_{e, 2p} = \tilde{\theta}^{\mathcal{M}}_{e, 2p}, \quad \tilde{\theta}^{\hat{\mathcal{M}}^{\mathrm{even}}}_{h_{1, 0}, 2p} = \tilde{\theta}^{\mathcal{M}}_{h_{1, 0}, 2p},$$
$$\tilde{\theta}^{\hat{\mathcal{M}}^{\mathrm{even}}}_{h_{2k-2, 0}, p} = \tilde{\theta}^{\mathcal{M}}_{h_{2k - 2, 0}, p} - \tilde{\theta}^{\mathcal{M}}_{h_{2k - 1, 0}, p},$$
$$\tilde{\theta}^{\hat{\mathcal{M}}^{\mathrm{even}}}_{h_{2k, 1}, 0} = \tilde{\theta}^{\mathcal{M}}_{h_{2k - 2, 1}, 0} - \tilde{\theta}^{\mathcal{M}}_{h_{2k - 1, 1}, 0},$$
for $k = 2, \dots, m', p \ge 0,$ and
$$\tilde{\theta}^{\hat{\mathcal{M}}^{\mathrm{even}}}_{s, p} = \tilde{\theta}^{\mathcal{M}}_{s, p}.$$

By analogy with Corollary \ref{opencor}, we obtain a hierarchy from the stability limit of the open potential $F_{H^{\mathrm{even}}}^{o}$.
\begin{cor}
	Consider the system for the pair $(\mathcal{F}^{even}, \mathcal{F}^{o,even})$, where $\mathcal{F}^{\mathrm{even}}$ satisfies the equations in Corollary \ref{evenstab} and $\mathcal{F}^{o, \mathrm{even}}$ is governed by the following system:
	
  $$\frac{\partial \mathcal{F}^{o, \mathrm{even}}}{\partial \sigma_{0, p}} = p \left( \lim_{n \to \infty} \frac{\partial F^{o}_{H^{\mathrm{even}}}}{\partial \hat{v}_{0, p}} \right);$$
  $$\frac{\partial \mathcal{F}^{o, \mathrm{even}}}{\partial \sigma_{k, p}} = p \left( \lim_{n \to \infty} \frac{\partial F^{o}_{H^{\mathrm{even}}}}{\partial \hat{v}_{k, p}} \right);$$
 $$\frac{\partial \mathcal{F}^{o, \mathrm{even}}}{\partial \sigma_{k, 0}} = \left( \lim_{n \to \infty} \frac{\partial F^{o}_{H^{\mathrm{even}}}}{\partial \hat{v}_{k, 0}} \right).$$
  In these equations, the variables on the right-hand side are subject to the following substitutions:
   $$\hat{v}_{k, r} = \frac{1}{r} \frac{\partial^{2} \mathcal{F}^{\mathrm{even}}}{\partial x \partial \sigma_{k, r}}, \quad s = \frac{\partial \mathcal{F}^{o, \mathrm{even}}}{\partial x}.
   $$
   
Consequently, this system is identified with the open $ \tau $-cover of the reduced hierarchy on the loop space of $ \mathbf{M}^{\mathrm{even}} \times \mathbf{C} $ as follows:
	$$\frac{\partial}{\partial \sigma_{0, p}} = p! \left( \frac{\partial}{\partial \widetilde{T}^{e, p-1}} + \frac{\partial}{\partial \widetilde{T}^{s, p-1}} \right);$$
	$$\frac{\partial}{\partial \sigma_{1, p}} = p! \frac{\partial}{\partial \widetilde{T}^{h_{1, 0}, p-1}};$$
	$$\frac{\partial}{\partial \sigma_{k, p}} = p! (\frac{\partial}{\partial \widetilde{T}^{h_{k, 0}, p-1}}-\frac{\partial}{\partial \widetilde{T}^{h_{k+1, 0}, p-1}});$$
	$$\frac{\partial}{\partial \sigma_{k, 0}} = \frac{\partial}{\partial \widetilde{T}^{h_{k, 1}, 0}}-\frac{\partial}{\partial \widetilde{T}^{h_{k+1, 1}, 0}}$$
	for $k \in \{2, 4, \dots, 2m'-2\}.$
\end{cor}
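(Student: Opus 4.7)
The proof strategy runs in direct parallel to Corollary \ref{opencor}, combining the reduction rules developed earlier in this section with the stabilization argument from Section 4. The plan is to leverage two established ingredients: first, the stabilization of $\partial F_H^o/\partial \hat{v}_{i,p}$ on the full Hurwitz manifold $H_{0,\mathbf{n}}$ obtained in the proof of Corollary \ref{opencor}; and second, the explicit expressions of both $\tilde{\theta}^{H^{\mathrm{even}}}_{\alpha,0}$ and $\tilde{\theta}^{\hat{\mathcal{M}}^{\mathrm{even}}}_{\alpha,p}$ as linear combinations of their unreduced counterparts indexed by the parity-paired sectors.

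First I would verify the stability limits. Starting from the defining relation $\partial F_{H^{\mathrm{even}}}^o/\partial v_{i,j} = \tilde{\theta}^{H^{\mathrm{even}}}_{v_{i,j},0}$ together with the restriction identities $\tilde{\theta}^{H^{\mathrm{even}}}_{v_{k,2j-1},0} = \tilde{\theta}^{H}_{v_{k,2j-1},0}$ for $k=0,1$ and $\tilde{\theta}^{H^{\mathrm{even}}}_{v_{2i-2,j},0} = \tilde{\theta}^{H}_{v_{2i-2,j},0} - \tilde{\theta}^{H}_{v_{2i-1,j},0}$, I would pass each derivative to the rescaled coordinates $\hat{v}_{i,p} = n_i v_{i,n_i-p}$ and take $n \to \infty$. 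The pre-established stability of $\partial F_H^o/\partial \hat{v}_{i,p}$ from Corollary \ref{opencor} then yields limits expressed through $\tilde{\theta}^{\mathcal{M}}_{e,p-1}$, $\tilde{\theta}^{\mathcal{M}}_{h_{k,0},p-1}$, $\tilde{\theta}^{\mathcal{M}}_{h_{k,1},0}$ and $\tilde{\theta}^{\mathcal{M}}_{s,p-1}$ evaluated at $\mathbf{u}=\mathbf{v}$; subtracting the appropriate parity-paired terms then reproduces exactly the densities $\tilde{\theta}^{\hat{\mathcal{M}}^{\mathrm{even}}}_{\alpha,p}$ listed in the excerpt preceding the statement.

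Next, I would promote these pointwise limits to identities for the $\tau$-cover. Under the substitutions $\hat{v}_{k,r} = \tfrac{1}{r}\partial_x(\partial \mathcal{F}^{\mathrm{even}}/\partial \sigma_{k,r})$ and $s = \partial_x \mathcal{F}^{o,\mathrm{even}}$, and using Corollary \ref{evenstab} to identify $\mathbf{v}^{\mathrm{even}}$ with the prescribed second derivatives of $\mathcal{F}^{\mathrm{even}}$, the defining equations collapse to $\partial \mathcal{F}^{o,\mathrm{even}}/\partial \sigma_{\alpha,p} = \tilde{\theta}^{\hat{\mathcal{M}}^{\mathrm{even}}}_{\alpha',p'}$ under the index correspondence dictated by the $\sigma$-to-$\widetilde{T}$ dictionary. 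By the definition of the open $\tau$-cover recalled in Section 2.2, this is exactly the claimed identification of flows.

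The main obstacle will be the bookkeeping in the logarithmic and the $e$-sectors. For the logarithmic sector, $\lim_{n\to\infty}\partial F_{H^{\mathrm{even}}}^o/\partial \hat{v}_{k,0}$ must reproduce $\tilde{\theta}^{\mathcal{M}}_{h_{2k-2,1},0} - \tilde{\theta}^{\mathcal{M}}_{h_{2k-1,1},0} = \log\frac{s-\varphi_{2k-2}}{s-\varphi_{2k-1}}$, which requires careful tracking of how the parity-paired singular points behave under the isomorphism $\mathcal{M}^{\mathrm{even}} \cong \hat{\mathcal{M}}^{\mathrm{even}}$ via $\hat{\lambda}_j(z)=\lambda_j(z)$. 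In the $e$-sector the limit combines both the regular contribution $\tilde{\theta}^{\mathcal{M}}_{e,p-1}$ and the purely open piece $\tilde{\theta}^{\mathcal{M}}_{s,p-1}$ just as in Corollary \ref{opencor}, so one must verify that the parity constraint $-\lambda_0(z)=\lambda_0(-z)$ is compatible with this additivity and restricts cleanly to the even-indexed flows $\partial/\partial \widetilde{T}^{e,2p-1} + \partial/\partial \widetilde{T}^{s,2p-1}$. Once these sector-specific compatibilities are checked, the remaining sectors follow by the same template already used in Corollary \ref{opencor}.
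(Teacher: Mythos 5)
Your proposal follows the same route as the paper, which offers no separate proof for this corollary but sets up precisely the ingredients you list: the restriction identities expressing $\tilde{\theta}^{H^{\mathrm{even}}}_{\alpha,0}$ and $\tilde{\theta}^{\hat{\mathcal{M}}^{\mathrm{even}}}_{\alpha,p}$ as parity-paired (differences of) unreduced densities, combined with the stabilization already established in Corollary \ref{opencor} and the $\sigma$-to-$\widetilde{T}$ dictionary of Corollary \ref{evenstab}. One minor slip: the flows surviving the reduction in the $e$-sector carry even time index ($\sigma_{0,2k-1}$ corresponds to $\widetilde{T}^{e,2k-2}+\widetilde{T}^{s,2k-2}$), not $\widetilde{T}^{e,2p-1}$ as written in your last paragraph, but this does not affect the argument.
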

In the special case $ m' = 1 $, our construction recovers the results presented in \cite{basalaev2022integrable}.



\begin{thebibliography}{10}
	
	\bibitem{dub1998}
	Dubrovin B.
	\newblock Geometry of 2D topological field theories.
	\newblock In {\em Integrable Systems and Quantum Groups}, pages 120--348. Springer, 1998.
	
	\bibitem{dubrovin2001normal}
	Dubrovin B. and Zhang Y.
	\newblock Normal forms of hierarchies of integrable PDEs, Frobenius manifolds and Gromov--Witten invariants.
	\newblock {\em arXiv preprint math/0108160}, 2001.
	
	\bibitem{witten1990two}
	Witten E.
	\newblock Two-dimensional gravity and intersection theory on moduli space.
	\newblock {\em Surveys in Differential Geometry}, 1(1):243--310, 1990.
	
	\bibitem{kontsevich1992intersection}
	Kontsevich M.
	\newblock Intersection theory on the moduli space of curves and the matrix Airy function.
	\newblock {\em Communications in Mathematical Physics}, 147:1--23, 1992.
	
	\bibitem{manin1999frobenius}
	Manin I.
	\newblock {\em Frobenius Manifolds, Quantum Cohomology, and Moduli Spaces}, volume~47.
	\newblock American Mathematical Society, 1999.
	
	\bibitem{zhang2002cp1}
	Zhang Y.
	\newblock On the $CP^{1}$ topological sigma model and the Toda lattice hierarchy.
	\newblock {\em Journal of Geometry and Physics}, 40(3-4):215--232, 2002.
	
	\bibitem{carlet2004extended}
	Carlet G., Dubrovin B., and Zhang Y.
	\newblock The extended Toda hierarchy.
	\newblock {\em Moscow Mathematical Journal}, 4(2):313--332, 2004.
	
	\bibitem{liu2015bcfg}
	Liu S.-Q., Ruan Y., and Zhang Y.
	\newblock BCFG Drinfeld--Sokolov hierarchies and FJRW-theory.
	\newblock {\em Inventiones Mathematicae}, 201:711--772, 2015.
	
	\bibitem{dubrovin2016hodge}
	Dubrovin B., Liu S.-Q., Yang D., and Zhang Y.
	\newblock Hodge integrals and tau-symmetric integrable hierarchies of Hamiltonian evolutionary PDEs.
	\newblock {\em Advances in Mathematics}, 293:382--435, 2016.
	
	\bibitem{liu2022variational}
	Liu S.-Q., Wang Z., and Zhang Y.
	\newblock Variational bi-Hamiltonian cohomologies and integrable hierarchies II: Virasoro symmetries.
	\newblock {\em Communications in Mathematical Physics}, 395(1):459--519, 2022.
	
	\bibitem{liu2023variational}
	Liu S.-Q., Wang Z., and Zhang Y.
	\newblock Variational bi-Hamiltonian cohomologies and integrable hierarchies I: Foundations.
	\newblock {\em Communications in Mathematical Physics}, 401(1):985--1031, 2023.
	
	\bibitem{liu2025generalized}
	Liu S.-Q., Qu H., and Zhang Y.
	\newblock Generalized Frobenius manifolds with non-flat unity and integrable hierarchies.
	\newblock {\em Communications in Mathematical Physics}, 406(4):77, 2025.
	
	\bibitem{buryak2015double}
	Buryak A.
	\newblock Double ramification cycles and integrable hierarchies.
	\newblock {\em Communications in Mathematical Physics}, 336(3):1085--1107, 2015.
	
	\bibitem{milanov2007hirota}
	Milanov T. E.
	\newblock Hirota quadratic equations for the extended Toda hierarchy.
	\newblock {\em Duke Mathematical Journal}, 138(1):161, 2007.
	
	\bibitem{milanov2016gromov}
	Milanov T., Shen Y., and Tseng H.-H.
	\newblock Gromov--Witten theory of Fano orbifold curves, Gamma integral structures and ADE-Toda hierarchies.
	\newblock {\em Geometry \& Topology}, 20(4):2135--2218, 2016.
	
	\bibitem{cheng2021hirota}
	Cheng J. and Milanov T.
	\newblock Hirota quadratic equations for the Gromov--Witten invariants of $P^1_{n-2,2,2}$.
	\newblock {\em Advances in Mathematics}, 388:107860, 2021.
	
	\bibitem{basalaev2021integrable}
	Basalaev A., Dunin-Barkowski P., and Natanzon S.
	\newblock Integrable hierarchies associated to infinite families of Frobenius manifolds.
	\newblock {\em Journal of Physics A: Mathematical and Theoretical}, 54(11):115201, 2021.
	
	\bibitem{basalaev2022integrable}
	Basalaev A.
	\newblock Integrable systems associated to open extensions of type A and D Dubrovin--Frobenius manifolds.
	\newblock {\em Journal of Physics A: Mathematical and Theoretical}, 55(29):295202, 2022.
	
	\bibitem{basalaev2024b}
	Basalaev A.
	\newblock On B-type family of Dubrovin--Frobenius manifolds and their integrable systems.
	\newblock {\em Letters in Mathematical Physics}, 114(5):120, 2024.
	
	\bibitem{basalaev2025genus}
	Basalaev A.
	\newblock Genus zero Whitham hierarchy via Hurwitz--Frobenius manifolds.
	\newblock {\em arXiv preprint arXiv:2507.07615}, 2025.
	
	\bibitem{dubrovin1998geometry}
	Dubrovin B.
	\newblock Differential geometry of the space of orbits of a Coxeter group.
	\newblock {\em Surveys in Differential Geometry}, 4(1):181--211, 1998.
	
	\bibitem{krichever1994tau}
	Krichever I. M.
	\newblock The $\tau$-function of the universal Whitham hierarchy, matrix models and topological field theories.
	\newblock {\em Communications on Pure and Applied Mathematics}, 47(4):437--475, 1994.
	
	\bibitem{carlet2011infinite}
	Carlet G., Dubrovin B., and Mertens L. P.
	\newblock Infinite-dimensional Frobenius manifolds for 2+1 integrable systems.
	\newblock {\em Mathematische Annalen}, 349:75--115, 2011.
	
	\bibitem{wu2012class}
	Wu C.-Z. and Xu D.
	\newblock A class of infinite-dimensional Frobenius manifolds and their submanifolds.
	\newblock {\em International Mathematics Research Notices}, 2012(19):4520--4562, 2012.
	
	\bibitem{wu2014infinite}
	Wu C.-Z. and Zuo D.
	\newblock Infinite-dimensional Frobenius manifolds underlying the Toda lattice hierarchy.
	\newblock {\em Advances in Mathematics}, 255:487--524, 2014.
	
	\bibitem{ma2021infinite}
	Ma S., Wu C.-Z., and Zuo D.
	\newblock Infinite-dimensional Frobenius manifolds underlying an extension of the dispersionless Kadomtsev--Petviashvili hierarchy.
	\newblock {\em Journal of Geometry and Physics}, 161:104006, 2021.
	
	\bibitem{ma2024infinite}
	Ma S., Wu C.-Z., and Zuo D.
	\newblock Infinite-dimensional Frobenius manifolds and extensions of genus-zero Whitham hierarchies.
	\newblock {\em arXiv preprint arXiv:2406.08239}, 2024.
	
	\bibitem{horev2012open}
	Horev A. and Solomon J. P.
	\newblock The open Gromov--Witten--Welschinger theory of blowups of the projective plane.
	\newblock {\em arXiv preprint arXiv:1210.4034}, 2012.
	
	\bibitem{basalaev2021open}
	Basalaev A. and Buryak A.
	\newblock Open Saito theory for A and D singularities.
	\newblock {\em International Mathematics Research Notices}, 2021(7):5460--5491, 2021.
	
	\bibitem{ma2025solutions}
	Ma S.
	\newblock Solutions to open WDVV equations for the universal Whitham hierarchy.
	\newblock {\em arXiv preprint arXiv:2512.03455}, 2025.
	
	\bibitem{manin2005f}
	Manin Y. I.
	\newblock F-manifolds with flat structure and Dubrovin's duality.
	\newblock {\em Advances in Mathematics}, 198(1):5--26, 2005.
	
	\bibitem{arsie2018flat}
	Arsie A. and Lorenzoni P.
	\newblock Flat F-manifolds, Miura invariants, and integrable systems of conservation laws.
	\newblock {\em Journal of Integrable Systems}, 3(1):xyy004, 2018.
	
	\bibitem{alcolado2017extended}
	Alcolado A.
	\newblock {\em Extended Frobenius Manifolds and the Open WDVV Equations}.
	\newblock PhD thesis, McGill University (Canada), 2017.
	
	\bibitem{basalaev2019open}
	Basalaev A. and Buryak A.
	\newblock Open WDVV equations and Virasoro constraints.
	\newblock {\em Arnold Mathematical Journal}, 5(2):145--186, 2019.
	
	\bibitem{takasaki2007universal}
	Takasaki K. and Takebe T.
	\newblock Universal Whitham hierarchy, dispersionless Hirota equations and multicomponent KP hierarchy.
	\newblock {\em Physica D: Nonlinear Phenomena}, 235(1-2):109--125, 2007.
	
	\bibitem{teo2011multicomponent}
	Teo L. P.
	\newblock The multicomponent KP hierarchy: Differential Fay identities and Lax equations.
	\newblock {\em Journal of Physics A: Mathematical and Theoretical}, 44(22):225201, 2011.
	
	\bibitem{aoyama1996topological}
	Aoyama S. and Kodama Y.
	\newblock Topological Landau--Ginzburg theory with a rational potential and the dispersionless KP hierarchy.
	\newblock {\em Communications in Mathematical Physics}, 182:185--219, 1996.
	
	\bibitem{ma2023principal}
	Ma S.
	\newblock Principal hierarchy of Frobenius manifolds associated with rational and trigonometric Landau--Ginzburg superpotentials.
	\newblock {\em arXiv preprint arXiv:2311.17363}, 2023.
	
	\bibitem{zuo2007frobenius}
	Zuo D.
	\newblock Frobenius manifolds associated to $B_l$ and $D_l$, revisited.
	\newblock {\em International Mathematics Research Notices}, 2007:rnm020, 2007.
	
\end{thebibliography}
\end{document}